\documentclass[12pt,colorinlistoftodos]{article}
\usepackage{amssymb, amsmath, amsbsy, amsfonts}
\usepackage{amsfonts}
\usepackage{booktabs}
\usepackage[T1]{fontenc}
\usepackage{amsthm}
\usepackage{amscd}
\usepackage{graphicx}
\usepackage{latexsym}
\usepackage{xcolor}
\usepackage[colorlinks]{hyperref}
\usepackage{subeqnarray}
\usepackage{eqnarray}
\usepackage[utf8]{inputenc}
\usepackage{amsmath}
\usepackage{amssymb}
\usepackage{amsthm}
\usepackage{epsfig}
\usepackage{hyperref}
\usepackage{subcaption}
\usepackage{soul}
\usepackage{float}
\usepackage{todonotes}

\newcommand{\M}{\mathcal{M}}
\newcommand{\R}{\mathcal{R}}

\newcommand{\V}{\mathcal{V}}

\newcommand{\IR}{\mathbb{R}}

% For bold math symbols
\usepackage{bm}
\def\bb{\bm{b}}
\def\bd{\bm{d}}
\def\bv{\bm{v}}
\def\bx{\bm{x}}
\def\by{\bm{y}}

\def\bE{\bm{E}}

\def\bI{\bm{I}}

\def\bN{\bm{N}}
\def\bR{\bm{R}}
\def\bS{\bm{S}}
\def\bW{\bm{W}}
\def\bX{\bm{X}}
\def\bZ{\bm{Z}}

\def\bvarepsilon{\bm{\varepsilon}}
\def\bbeta{\bm{\beta}}
\def\bgamma{\bm{\gamma}}
\def\blambda{\bm{\lambda}}

\def\bzeta{\bm{\zeta}}
\def\bdelta{\bm{\delta}}

\def\b0{\bm{0}}
\def\bOmega{\bm{\Omega}}
\def\diag{\mathsf{diag}}

%% What on earth is this? Always use lowercase for environments, and use a 
%% single environment as base for numbering.
\newtheorem{theorem}{Theorem}

\newtheorem{lemma}[theorem]{Lemma}
\newtheorem{proposition}[theorem]{Proposition}

\usepackage[normalem]{ ulem }
\usepackage{soul}
\usepackage{cancel}

%diagram
\usepackage{tikz}
\usetikzlibrary{shapes.geometric, arrows}
\usepackage{pgfplots}
\pgfplotsset{compat=1.15}
\usepackage{mathrsfs}
\usetikzlibrary{arrows}
\tikzstyle{startstop} = [rectangle, rounded corners, minimum width=3cm,minimum height=1cm,text centered, draw=black, fill=blue!30]
\tikzstyle{io} = [trapezium, trapezium left angle=70, trapezium right angle=110, minimum width=3cm, minimum height=1cm, text centered, draw=black, fill=blue!30]
\tikzstyle{process} = [rectangle, trapezium left angle=70, trapezium right angle=110, minimum width=3cm, minimum height=1cm, text centered, draw=black, fill=green!30]
\tikzstyle{arrow} = [thick,->,>=stealth]
%
%diagram de flot
\PassOptionsToPackage{force}{filehook}
\usepackage{tikz}
\usetikzlibrary{shapes,arrows}
\usetikzlibrary{positioning}
\tikzstyle{cloud} = [draw, ellipse,fill=red!20, node distance=0.87cm,
minimum height=2em]
\tikzstyle{line} = [draw, -latex']
\usetikzlibrary{shapes.symbols,shapes.callouts,patterns}
\usetikzlibrary{calc}
 \numberwithin{dummy}{section}
 
\setlength\parskip{5pt}
\setlength\textheight{22.5cm}
\setlength\textwidth{17cm}
\setlength\oddsidemargin{-10pt}
\setlength\topmargin{0pt}
\setlength\headsep{0pt}
\providecommand{\keywords}[1]
{
	\small	
	\textbf{\textit{Keywords---}} #1
}

%%%%%%%%%%%%%%%%%%%%%%%%%%%%%%%%%%%%%%%%%%%
%%%%%%%%%%%%%%%%%%%%%%%%%%%%%%%%%%%%%%%%%%%
%% FOR ANNOTATIONS AND INSTRUCTIONS
%%%%%%%%%%%%%%%%%%%%%%%%%%%%%%%%%%%%%%%%%%%
%%%%%%%%%%%%%%%%%%%%%%%%%%%%%%%%%%%%%%%%%%%
% AMS math stuff
\usepackage{amsmath}
% For annotations
\usepackage{todonotes}
% Tables
\usepackage{booktabs}
% For enumerations
\usepackage{enumitem}
% Changing geometry to this for convenience when using todonotes. 
% When finished editing, change back to whatever or remove.
\usepackage{geometry}
\geometry{left=2.5cm, right=2.5cm, top=2cm, bottom=2.5cm}
%%%%%%%%%%%%%%%%%%%%%%%%%%%%%%%%%%%%%%%%%%%
%%%%%%%%%%%%%%%%%%%%%%%%%%%%%%%%%%%%%%%%%%%
%% END FOR ANNOTATIONS AND INSTRUCTIONS
%%%%%%%%%%%%%%%%%%%%%%%%%%%%%%%%%%%%%%%%%%%
%%%%%%%%%%%%%%%%%%%%%%%%%%%%%%%%%%%%%%%%%%%

\title{A metapopulation model for the spread of cholera}
\author{Abdramane Annour Saad$^{\bf a}$ \and Julien Arino$^{\bf b}$ \and Patrick M Tchepmo Djomegni$^{\bf c}$\and Mahamat S Daoussa Haggar$^{\bf a}$ \\
\small $^{\bf a}$Laboratory L2MIAS, University of N'Djamena, Chad\\
\small	$^{\bf b}$Department of Mathematics, University of Manitoba, Winnipeg, MB, Canada\\
\small	$^{\bf c}$School of Mathematical and Statistical Sciences, North-West University, South Africa}

\begin{document}

\maketitle
\begin{abstract}
    We consider a metapopulation model of cholera describing explicitly the movement of individuals and contaminated water between locations as well as a simple vaccination mechanism.
    The global stability of the disease-free equilibrium point when the location-specific reproduction numbers are all less than unity is established.
    We then conduct some numerical investigation of the model.
\end{abstract}
\keywords{Cholera, Metapopulation modelling, Reproduction number, Movement, Sensitivity analysis}	

\section{Introduction}
Cholera is an intestinal disease caused when the bacterium \emph{Vibrio cholerae} colonises the human intestine \cite{colwell1996global}. 
% Standard bacteriological procedures for isolation of the cholera vibrio from environmental samples, including water, between epidemics generally were unsuccessful \cite{colwell1996global}. 
Mechanisms of cholera transmission include water, unwashed contaminated food and seafood \cite{griffith2006review}. 
% The disease-causing agent was first discovered by F. Pacini in 1854 and, again, by R. Koch in 1883. 
The symptoms of cholera are severe and profuse diarrhoea leading to severe dehydration. 
In the absence of prompt treatment, its classical form can result in death in over half of the cases \cite{eliot2012interpreter}.
The incubation period ranges from hours to 5 days; shortly thereafter, symptoms begin abruptly. Clinical forms include dry cholera, in which the infected individual enters a state of shock and dies rapidly; classic form, characterised by diarrhoea and vomiting and can lead to death; and mild form, in which there is no apparent symptomatology and the disease is only diagnosed through laboratory tests. 
Treatment involves rehydration and the use of antibiotics, while prevention is achieved through sanitation measures, water treatment and care with personal hygiene and nutrition \cite{longini2007controlling}.

The motivation for this paper is the situation around Lake Chad and more generally, in the Lake Chad Basin.
Since the 1990s, the burden of cholera incidence has shifted from the Americas to Africa, with over 99\% of reported deaths of cholera in 2009 taking place there \cite{WHO2010cholera}, for instance.
The Lake Chad Basin itself has become one of the hotspots of cholera on the continent and a region where cholera is both endemic and subject to regular outbreaks \cite{koua2025exploring}.

Lake Chad is an endorheic freshwater lake.
As a consequence of desertification and diversion of water for human use, Lake Chad's surface area has dramatically dropped and now varies between 2,000 and 5,000 square kilometres depending on the season and the year \cite{coe2001human}, extending even further some years during floods.
Even with this reduced size, Lake Chad borders four countries: Cameroon, Nigeria and the landlocked Chad and Niger.
Because Lake Chad is endorheic, contaminants, including cholera, accumulate or persist longer than they would with exorheic lakes or basins.
The catchment area of Lake Chad, the so-called Lake Chad Basin, is large, covering, depending on the precise definition used, from 1 to 2.5 million square kilometres \cite{magrin2015atlas} and comprising areas of the four countries neighbouring the lake as well as areas in Algeria, the Central African Republic, Libya and Sudan \cite{asah2015transboundary}.
Altogether, Lake Chad is home to 2--3 million people and the wider Lake Chad Basin to dozens of millions people.
These people hail from varied ethnic groups \cite{magrin2015atlas} and are further subject to the presence of country boundaries that are, in some instances, not necessarily observed by, for instance, nomadic herders living in the area.
The situation is further complicated by complex interactions between ethnic groups, countries and a degrading environment \cite{owonikoko2020environmental}, as well as terrorism in the area \cite{magrin2018crisis}.
As a water-borne infection, the spatial and temporal patterns of cholera epidemics are closely linked to water flows and the ecology of the bacterium in the environment, which is itself influenced by weather conditions and climate variability.
Furthermore, the movement of population in the area is equally important.

% According to the United Nations, 38,000 cases and 845 deaths have been reported in the three countries since the beginning of the year, which is 15 times the average number of cholera cases over the past four years \cite{world2011cholera,world2013cholera}.
% The affected regions constitute a demographic basin of over 6 million inhabitants, and experts estimate that due to the significant population movements, the spread to other countries is not to be excluded, hence the urgency to act at all levels \cite{Eric}.
% In the case of the current epidemic in the Lake Chad Basin, the epicentre appears to be in northern Nigeria, an endemic area characterised by an absence of basic health infrastructure.
% In the Hadjer Lamis region (Chad), cholera spread in a star-shaped pattern from the centre towards the neighbourhoods and then the peripheral villages of the first and second belts. 
% Massakory also serves as a crossroad between the populations of Hadjer Lamis and those coming from Kanem, Bar El Gazel, and regions to the East and North.
% This entity acted as a significant amplifier during the 2011 epidemic, but it is not an entry point \cite{uniceftchad2016}.
% It is important to note that with the unrest related to the presence of Boko Haram in the state of Borno (North Nigeria), which borders Lake Chad \cite{boloweti2021analyse}, infectious disease surveillance is difficult and it is possible that cholera cases may be present in this region without having been identified.

Regarding its spatio-temporal spread, cholera is the first disease to have been studied under the lens of geographical medicine, in the 1854 seminal work of Snow \cite{mcleod2000our,tulchinsky2018john}.
However, in the mathematical epidemiology sphere, spatial aspects of cholera have not been studied much.
In a recent (2024) systematic review of modelling techniques in cholera epidemiology, \cite{anteneh2024modelling} identified 476 relevant papers; of these, only 10\% considered spatial aspects and most of them used statistical techniques.
Very few models were formulated using partial differential equations models; see, e.g., \cite{nnaji2024spatio}.
Agent-based models (ABM), on the other hand, were used more because they allow for a level of realism that other model types fail to achieve; see, e.g., the models in \cite{augustijn2016agent,crooks2014agent}.
% However, while really useful when considering realistic hyper-local scenarios such as the situation in a refugee camp, ABM are typically not amenable to much mathematical analysis and lose ``power'' to mathematical models whenever the law of large number starts to apply.

Our focus here is on another type of models, so-called metapopulation models, which expand classic compartmental epidemiological models by incorporating space as discrete locations between which movement is possible \cite{Arino2017}.
We are aware of very few models of this type specific to cholera.
In \cite{grad2012cholera}, an epidemic model for cholera is set in a metapopulation framework; because it is epidemic, this model focuses on outbreaks.
In \cite{bertuzzo2016probability}, a model is considered that does not implicitly incorporate any sort of movement except for water.
Other models \cite{bertuzzo2008space,bertuzzo2010spatially,mari2012modelling} by the same team also only consider movement of water and implicit long distance coupling for humans.
The model in \cite{trevisin2022epidemicity} is extremely detailed in each patch.
It is also, like the model in \cite{bertuzzo2016probability}, not explicit for movement except for the water component.
A few models consider SIR models with a bacterial count compartment but only in two locations; see, e.g., \cite{amadi2022metapopulation,njagarah2014metapopulation}.
Our focus here is the model in \cite{tuite2011cholera}, which expands a well-known basic model of \cite{codecco2001endemic} to the population of each of the 10 administrative departments of Haiti. 
Movement of humans is included implicitly, in the sense that the term describing infection in a given location includes dependence on the number of infectious humans in other locations.
Save for the different nature of the incidence function used there and the presence of vaccination, the resulting model is in essence similar to the model \cite{senapati2019cholera}.    
% Note that all of the work in \cite{bertuzzo2008space,bertuzzo2010spatially,mari2012modelling,tuite2011cholera} make use of reasonably good data.
% This is not an option in our work here: the social and administrative complexities mentioned earlier mean that data on cholera in the Lake Chad Basin is heterogeneous and typically not very reliable.

The objective of our work is to extend the model in \cite{tuite2011cholera}, taking into account explicitly the movement of both humans and contaminated water.
This will have the effect of bringing that model more in line with the model in \cite{senapati2019cholera}.
We ensure that the model is sound by performing some mathematical analysis of its properties.
We then proceed to a computational analysis. Of particular interest to use is whether it is necessary to include water movement in simple models, both implicitly as \cite{bertuzzo2008space,bertuzzo2010spatially,mari2012modelling} or explicitly as \cite{senapati2019cholera}.
This paper is organised as follows. 
In Section~\ref{sec:model}, we present the metapopulation model, while the analytical study of the model is presented in Section~\ref{sec:analysis}. 
We conclude with numerical considerations in Section~\ref{sec:computational}.

%%%%%%%%%%%%%%%%%%%%%%%%%%%%%%%%%%%%%%
%%%%%%%%%%%%%%%%%%%%%%%%%%%%%%%%%%%%%%
%%%%%%%%%%%%%%%%%%%%%%%%%%%%%%%%%%%%%%
%%%%%%%%%%%%%%%%%%%%%%%%%%%%%%%%%%%%%%
\section{The metapopulation model for cholera}
\label{sec:model}
\subsection{Formulation of the model}
The model is a variation on the model of \cite{tuite2011cholera}. 
We assume that there is a set of $n$ locations (also called \emph{patches}) between which cholera can spread when either an individual bearing the pathogen or contaminated water moves from one location to another.
In the model of \cite{tuite2011cholera}, contaminated water movement is \emph{implicit}, whereas we make it \emph{explicit} here.
Doing so allows to incorporate information about water flows, if available.

In a given location $i=1,\ldots,n$, we denote $N_i(t)$ the total human population and $W_i(t)$ the concentration of bacteria in the water at time $t\geq 0$.
The human population is further divided into three distinct compartments, with $S_i(t)$, $I_i(t)$ and $R_i(t)$ denoting the number of individuals susceptible to the disease, $I_i(t)$infected by it, $I_i(t)$ and recovered from it, $R_i(t)$, so that $N_i(t)=S_i(t)+I_i(t)+R_i(t)$.
In the sequel, time dependence of the state variables is omitted if this does not lead to confusion.

Susceptible individuals become infected either by contact with infected individuals, with contact parameter $\beta_i^I$, or by contact with contaminated water, with contact parameter $\beta_i^W$.
Contrary to \cite{tuite2011cholera}, it is assumed here that transmission is \emph{local}, in the sense that individuals only acquire infection from water or individuals present in the location in which they are currently located.
Note also that contrary to the models in, say, \cite{bertuzzo2008space,bertuzzo2010spatially,mari2012modelling,senapati2019cholera}, we do not assume that the force of infection of bacteria acting on humans is a saturating function of the bacterial load $W_i$ and instead use the linear term $\beta_i^WW_i$, giving mass action incidence for both human-to-human and water-to-human transmission.

We assume that individuals can be vaccinated against the disease and denote $v_i$ the \emph{per capita} vaccination rate of susceptible individuals. 
In this simple model, we assume that vaccination moves susceptible individuals into the recovered compartment, affording them the temporary immunity enjoyed by other individuals there.
While not as realistic as models with an explicit vaccination compartment, this avoids the complication of potential so-called backward bifurcations, which complicate the mathematical analysis without necessarily adding much information from an epidemiological perspective \cite{annoursaad2025modelcholerainfectiousnessdeceased}.

Infected individuals can in turn contaminate the water compartment by shedding the pathogen at a \emph{per capita} rate $\zeta_i$.
Note that this is not a flow between compartments in the usual sense: shedding is a consequence of infection but does not decrease the number of infected humans.
%An infected individual thus generates secondary infections in two ways: through direct contact with susceptible individuals and by shedding the pathogen into the water compartment, with which susceptible individuals subsequently come into contact.

Regarding demographic processes, we assume that humans are recruited into the susceptible compartment (there is no vertical transmission of cholera) at the constant rate $b_i$ and are subject to natural death at the \emph{per capita} rate $d_i^H$ regardless of their disease status.
\emph{V. cholerae} does not survive indefinitely in the water; it dies at the \emph{per capita} rate $d_i^C$.

Concerning movement between patches, we assume the following.
Humans move from location $j$ to location $i$ at a rate $m_{ij}$, which can depend on disease status, which we denote $m_{ij}^X$ for $X\in\{S,I,R\}$.
There is no death during movement.

Contaminated water can also move between locations, because of river flows, flooding or lake water movement.
We denote $m_{ij}^W$ the rate of movement of contaminated water from location $j$ to location $i$. 
Note that although mortality of vibrio is not explicitly incorporated in $m^W$, this is easily done. 
Suppose for instance that $\tilde m_{ij}^W$ is a known uncontaminated water movement rate from patch $j$ to patch $i$. If this transport takes $t_{ij}^W$ time units, then using $m_{ij}^W=\exp(-d^C_jt_{ij}^W)\tilde m_{ij}^W$ incorporates survival of vibrio during transport from $j$ to $i$ if the lifetime of individual bacteria is exponentially distributed.

%%%%%%%%%%%%%%%%%%%%%%%%%%%%%%%%%%%%%%
%\subsection{Mathematical model}
%\todo[inline,color=yellow]{Wait... Was the previous subsection not the model? You need a single subsection instead of two, and you need to reorganise things from the two subsections. I suggest to start with saying the model is derived from \cite{tien2010multiple}. Then list assumptions and explain the model ``at the patch level'', then list assumptions and explain how things move between patches, then explain the notation that simplifies the model, then finally state the model.}

%%%%%%%%%%%%%%%%%%%%%%%%%%%%%diagram
\begin{figure}[htbp]
    \centering
    \begin{tikzpicture}[scale=0.8, 
			every node/.style={transform shape},
			auto,
			box/.style={minimum width={width("N-1")+12pt},
				draw, rectangle}]
        \node [box, fill=green!30] at (0,4) (SJ) {$S_j$};
        \node [box, fill=green!30] at (0,10.5) (SI) {$S_i$};
        %	\node [box] at (0,13.5) (SC) {$S_C$};
        %% States (variant X in patch i)
        \node [box, fill=blue!20] at (2.5,1.8) (WXJ) {$W_j$};
        %\node [box, fill=gray!20] at (5,1.5) (DXCE) {$D_E$};
        \node [box, fill=red!40] at (5.9,4) (IXJ) {$I_j$};
        \node [box, fill=yellow!20] at (8.9,1.7) (RXJ) {$R_j$};
        %% States (variant X in patch j)
        \node [box, fill=blue!20] at (2.5,7.8) (WXI) {$W_{i}$};
        \node [box, fill=red!40] at (6,10.5) (IXI) {$I_{i}$};
        \node [box, fill=yellow!20] at (8.9,7.8) (RXI) {$R_{i}$};
        %Rotation des patchs
        \node[rotate=90] at (-2,2.9) {\Large\textbf{Patch j}};
        \node[rotate=90] at (-2,9.25) {\Large\textbf{Patch i}};
        % "meta-boxes" around sets of compartments
        % Les carrées dans un patch
        \draw [black,rounded corners] (-1.65,-1) -- (-1.65,5.8) -- (11.4,5.8) -- (11.4,-1) -- cycle;
        \draw [black,rounded corners] (-1.65,6.3) -- (-1.65,13) -- (11.4,13) -- (11.4,6.3) -- cycle;
        %\draw [black,rounded corners, dotted] (-0.5,11.25) -- (8.5,11.25) -- (8.5,7.05) -- (-0.5,7.05)  -- cycle node at (2.75, 11) {\textbf{Introduction layer}};
        %\draw [black,rounded corners, dotted] (-0.5,11.25) -- (8.5,11.25) -- (8.5,15.9) -- (-0.5,15.9)  -- cycle node at (2.2, 11.5) {\textbf{Community}};
        %% Flows (variant X in exporter)
        \path [line,  thick] (SJ) to node [above, sloped] (TextNode) {$\quad\quad\quad\quad\quad\quad\quad\lambda_j$} (IXJ);
        \path [line,  thick] (SJ) to node [pos=0.7,above, sloped] (TextNode) {$\upsilon_j S_j$} (RXJ);
        %\path [line, thick] (WXJ) to node [pos=0.1, right=0.25cm] (TextNode) {$(1-p_E)\varepsilon_E L_E$} (IXJ);
        \path [line,  thick] (IXJ) to node [above,sloped] (TextNode) {$\gamma_{j} I_j$} (RXJ);
        \path [line,  thick] (IXI) to node [above,sloped] (TextNode) {$\gamma_{i} I_i$} (RXI); 
        \path [line, thick, dashed, bend left] (IXJ) to node [midway, below, sloped] (TextNode) {$\zeta_{j} I_j$} (WXJ);
        %\path [line, thick, dashed] (UXCE) to node [midway, above, sloped] (TextNode) {$m_{UIE} U_E$} (UXI);
        %% Flows (imported variant X)
        \path [line,   bend left=10,thick] (SI) to node [above, sloped] (TextNode) {$m_j S_i$} (SJ);
        \path [line, bend left=10, thick] (SJ) to node [above, sloped] (TextNode) {$m_i S_j$} (SI);
        \path [line,  bend left=10, thick] (WXJ) to node [above, sloped] (TextNode) {$m_j W_i$} (WXI);
        \path [line, bend left=10, thick] (WXI) to node [above, sloped] (TextNode) {$m_i W_j$} (WXJ);
        \path [line,  bend left=10, thick] (IXI) to node [above, sloped] (TextNode) {$m_j I_i$} (IXJ);
        \path [line,bend left=10, thick] (IXJ) to node [above, sloped] (TextNode) {$m_i I_j$} (IXI);
        \path [line,  bend left=10, thick] (RXI) to node [above, sloped] (TextNode) {$m_j R_i$} (RXJ);
        \path [line, bend left=10, thick] (RXJ) to node [pos=0.6,above, sloped] (TextNode) {$m_i R_j$} (RXI);
        \path [line, thick] (SI) to node [midway, above, sloped] (TextNode) {$\lambda_i S_i$} (IXI);
        \path [line, thick, dashed, bend left] (IXI) to node [midway, below, sloped] (TextNode) {$\zeta_{i} I_i$} (WXI);
        
        \path [line, thick] (SI) to node [pos=0.8, above, sloped] (TextNode) {$\upsilon_i S_{i}$} (RXI);
        \path [line, thick, bend left=9.5cm] (RXI) to node [pos=0.53, above, sloped] (TextNode) {$\varepsilon_i R_{i}$} (SI);
        \path [line, thick, bend right=9.5cm] (RXJ) to node [pos=0.5, above, sloped] (TextNode) {$\varepsilon_j R_{j}$} (SJ);
        % Les fleches d'entree et sorties
        \coordinate[left=1cm of SI] (r1);
        \coordinate[left=1cm of SJ] (r2);
        \coordinate[left=0.95cm of WXI] (w1);
        \coordinate[below of=WXJ] (w2);
        \coordinate[right=1.6cm of IXI] (di1);
        \coordinate[right=1.6cm of IXJ] (di2);
        \coordinate[right=1.2cm of RXI] (dr1);
        \coordinate[below of=RXJ] (dr2);
        \coordinate[above=1.4cm of SI] (ds1);
        \coordinate[below of=SJ] (ds2);
        %% Flows
        \path [line, thick] (r1) to node [midway, above] (TextNode) {$b_i$} (SI);
        \path [line,  thick] (r2) to node [midway, above] (TextNode) {$b_j$} (SJ);
        \path [line,  thick] (WXI) to node [midway, above] (TextNode) {$d^C_i W_i$} (w1);
        \path [line,  thick] (WXJ) to node [midway, left] (TextNode) {$d^C_jW_j$} (w2);
        \path [line, thick] (IXI) to node [midway, below] (TextNode) {\tiny$ \delta_iI_i+d^H_iI_i$} (di1);
        \path [line,  thick] (IXJ) to node [midway, above] (TextNode) {\tiny$\delta_jI_j + d^H_jI_j$} (di2);
        \path [line,  thick] (RXI) to node [midway, above] (TextNode) {$ d^H_iR_i$} (dr1);
        \path [line,  thick] (RXJ) to node [midway, right] (TextNode) {$d^H_jR_j$} (dr2);
        \path [line,  thick] (SI) to node [midway, left] (TextNode) {$d^H_iS_i$} (ds1);
        \path [line,  thick] (SJ) to node [midway, right] (TextNode) {$d^H_j S_j$} (ds2);
    \end{tikzpicture}
    \caption{Flow diagram of the SIRW model in the case of two patches.}
    \label{fig:flow_diagram}
\end{figure}
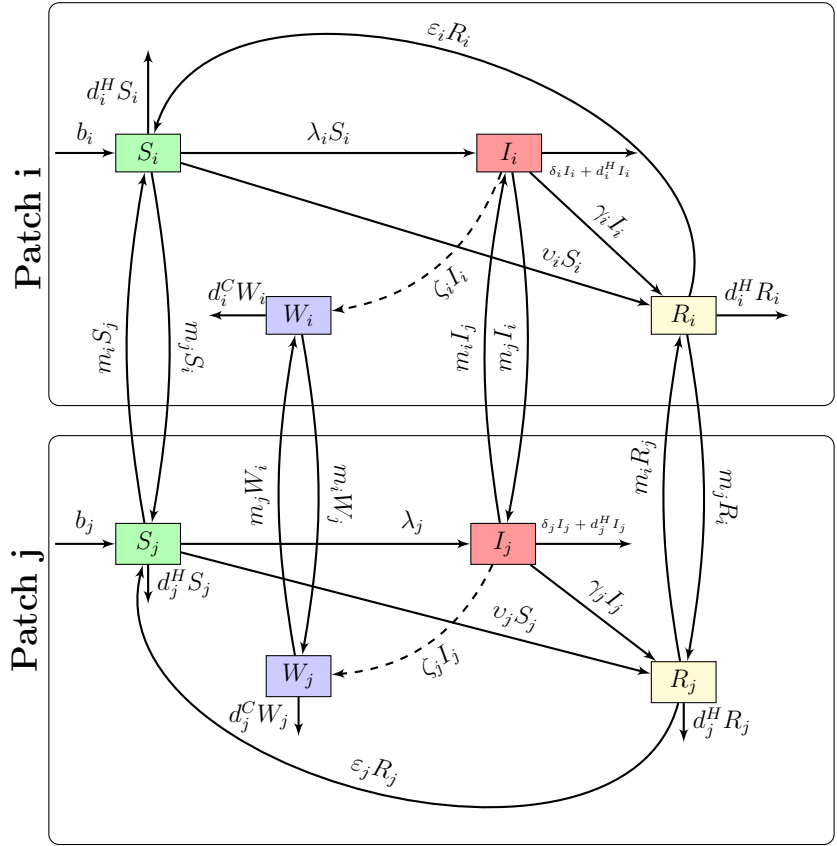

Let $X_i$ be the population or density of compartment $X\in \{S,I,R,W\}$ in location $i$. If there are no other processes affecting compartment $i$, then
\begin{equation}\label{eq:inflow_outflow_metapop_X}
    \frac{d}{dt}X_i = \sum_{\stackrel{j=1}{j\neq i}}^{n}m^{X}_{ij} X_j
- \left(\sum_{\stackrel{j=1}{j\neq i}}^{n}m^{X}_{ji}\right) X_i
\end{equation}
describes the rate of change of $X_i$ because of the spatial coupling of $X_i$ for all other locations, where $m_{ij}^X$ is the rate of movement of compartment $X$ from location $j$ to location $i$. In order to simplify notation, we write
\[
m_{ii}^X = - \sum_{\stackrel{j=1}{j\neq i}}^{n}m^{X}_{ji}
\]
the rate of outflow from location $i$. 
Using this notation, \eqref{eq:inflow_outflow_metapop_X} is written as
\[
\frac{d}{dt}X_i = \sum_{j=1}^{n}m^{X}_{ij} X_j.
\]
The flow diagram in Figure~\ref{fig:flow_diagram} shows the model structure for two locations.
%%%%%%%%%%%%%%%%%%%%%%%%%
The mathematical model is then given in patch $i=1,\ldots,n$ by
\begin{subequations}
\label{sys:general_form}
\begin{align}
    \frac{d}{dt}S_i &= 
        b_i+\varepsilon_iR_i-\lambda_i S_i
        -\left(v_i+d^H_i\right)S_i+\sum_{j=1}^{n}m^{S}_{ij} S_j 
        \label{sys:general_form_dS} \\ 
    \frac{d}{dt}I_i &= 
        \lambda_i S_i-\left(\gamma_i+\delta_i+d^H_i\right)I_i 
        +\sum_{j=1}^{n}m^{I}_{ij} I_j 
        \label{sys:general_form_dI} \\ 
    \frac{d}{dt}R_i &=
        \gamma_i I_i+v_i S_i-\left(\varepsilon_i+d^H_i\right)R_i +\sum_{j=1}^{n}m^{R}_{ij} R_j 
        \label{sys:general_form_dR} \\ 
    \frac{d}{dt}W_i &= 
        \zeta_iI_i-d^C_iW_i
        +\sum_{j=1}^{n}m^{W}_{ij} W_j,
        \label{sys:general_form_dW} 
\end{align}
where 
\begin{equation}
    \lambda_i=\beta_i^II_i+\beta_i^WW_i
\end{equation}
is the force of infection acting on humans in patch $i$.
System \eqref{sys:general_form} is considered together with the initial condition
\begin{equation}
    S_i(0)>0, I_i(0) \geq 0, R_i(0)\geq 0, W_i(0)\geq 0,
\end{equation}
in which, to avoid a trivial situation, it is further assumed that there exists $i\in\{1,\ldots,n\}$ such that $I_i(0)+W_i(0)>0$.
\end{subequations}
State variables and parameters are summarised in Tables~\ref{tab:variable} and~\ref{tab:parameters}, respectively.
For convenience, denote $N_i^H=S_i+I_i+R_i$ the total number of (human) individuals in location $i=1,\ldots,n$.

\begin{table}[htbp]
\centering
\begin{tabular}{cl}
\toprule
Variable & Description \\
\midrule
$S_i$ & Number of susceptible individuals \\
$I_i$ & Number of infected individuals \\
$R_i$ & Number of recovered individuals \\
$N_i^H$ & Total number of individuals \\
$W_i$ & Concentration of \emph{V. cholerae} in water \\
\bottomrule
\end{tabular}
\caption{State variables of the SIRW model, where $i=1,\ldots,n$ is the index of the location.}
\label{tab:variable}
\end{table}

\begin{table}[htbp]
\centering
\begin{tabular}{cl}
\toprule
Parameter & Description\\ 
\midrule
\multicolumn{2}{l}{Human-related parameters} \\
$b_i$ & Recruitment rate of human population \\
$d_i^H$ & Natural death rate of humans \\
$v_i$ & Vaccination rate \\
$\delta_i$ & Disease-induced death rate \\
$\gamma_i$ & Recovery rate \\
$\varepsilon_i$ & Rate of loss of immunity \\
\midrule
\multicolumn{2}{l}{\emph{V. cholerae}-related parameters} \\
$\zeta_i$ & Pathogen shedding rate by humans \\
$d_i^C$ & Death rate of pathogens in water reservoir \\
\midrule
\multicolumn{2}{l}{Disease transmission-related parameters} \\
$\lambda_i$ & Force of infection acting on susceptible humans\\
$\beta_i^W$ & Infection by water \\
$\beta_i^I$ & Infection by humans \\
\midrule
\multicolumn{2}{l}{Movement-related parameters} \\
$m_{ij}^X$ & Movement rate of population $X$ from $j$ to $i$ \\ 
\bottomrule
\end{tabular}
\caption{Parameters of the SIRW model. The recruitment rate is constant, all other \emph{rates} are \emph{per capita}.}
\label{tab:parameters}
\end{table}

%%%%%%%%%%%%%%%%%%%%%%%%%%%%%%%%%%%%%%%%%%%%%%%%%%%
\subsection{Vector form of the model}
Before proceeding further, let us specify some vector and matrix notation used later. 
Vectors are assumed to be column vectors but are written without orientation unless required.
If $\bx=(x_1,\ldots,x_n)\in\IR^n$, then $\bx\geq\b0$ if $x_i\geq 0$, $\bx>\b0$ if $\bx\geq\b0$ and there exists $i$ such that $x_i>0$; finally, $\bx\gg\b0$ if $x_i>0$ for all $i=1,\ldots,n$. For $\bx,\by\in\IR^n$, $\bx\geq\by$, $\bx>\by$ and $\bx\gg\by$ if, respectively, $\bx-\by\geq 0$, $\bx-\by>0$ and $\bx-\by\gg\b0$.
The same notation is used for matrices.

Writing \eqref{sys:general_form} in vector form is extremely useful in the remainder of the work, so we introduce some notation here.
For $X\in\{S,I,R,W\}$, denote $\bX=(X_1,\ldots,X_n)$ and let
\[
\bZ = (\bS,\bI,\bR,\bW)
\]
be the vector of state variables.
Let $\bb=(b_1,\ldots,b_n)$,
$\bv=\diag(v_1,\ldots,v_n)$, 
$\bd^H=\diag\left(d_1^H,\ldots,d_n^H\right)$,
$\bgamma=\diag(\gamma_1,\ldots,\gamma_n)$, 
$\bdelta=\diag(\delta_1,\ldots,\delta_n)$, 
$\bvarepsilon=\diag(\varepsilon_1,\ldots,\varepsilon_n)$,
$\bzeta=\diag(\zeta_1,\ldots,\zeta_n)$,
$\bd^C=\diag\left(d_1^C,\ldots,d_n^C\right)$,
$\bbeta^I=\diag(\beta_1^I,\ldots,\beta_n^I)$ and 
$\bbeta^W=\diag(\beta_1^W,\ldots,\beta_n^W)$.
% Then the vector form of \eqref{sys:general_form} is\todo[color=red!40]{If we are not using Kamgang and Sallet, then that is not the vector form we need.}
% \begin{equation}\label{sys:vector_form}
%     \bZ'=\tilde\bb+\bG(\bZ)\bZ+\mathcal{M}\bZ,
% \end{equation}
% with $\tilde\bb$ consisting of the vector $\bb=(b_1,\ldots,b_n)^T$ followed by $3n$ zeros,
% \[
% \bG(\bZ) =
% \begin{pmatrix}
% -\blambda(\bZ)-\bv-\bd^H & & \bvarepsilon & \\ 
% \blambda(\bZ) & -\bgamma-\bmu-\bd^H & & \\ 
% \bv & \bgamma & -\bvarepsilon-\bd^H & \\ 
% & \bzeta & & -\bd^C
% \end{pmatrix},
% \]
% is a $4n\times 4n$ block matrix (only nonzero blocks are shown),
% $\bv=\diag(v_1,\ldots,v_n)$, 
% $\bd^H=\diag(d_1^H,\ldots,d_n^H)$,
% $\bgamma=\diag(\gamma_1,\ldots,\gamma_n)$, 
% $\bmu=\diag(\delta_1,\ldots,\delta_n)$, 
% $\bvarepsilon=\diag(\varepsilon_1,\ldots,\varepsilon_n)$,
% $\bzeta=\diag(\zeta_1,\ldots,\zeta_n)$
% and $\bd^C=\diag(d_1^C,\ldots,d_n^C)$.
% The force of infection matrix takes the form
% \[
% \blambda(\bZ)=\diag(\lambda_1(Z_1),\ldots,\lambda_n(Z_n)),
% \]
% where $Z_i=\left(S_i, I_i, R_i, W_i \right)$.
% Finally, the matrix $\M$ in \eqref{sys:vector_form} is the \emph{movement matrix},
% \[
% \M =\M^S\oplus\M^I\oplus\M^R\oplus\M^W, 
% \]
Finally, for $X\in\{S,I,R,W\}$, the \emph{movement matrix} for compartment $X$ is given by
\begin{equation}\label{eq:movement}
    \mathcal{M}^X=\left( \begin{array}{cccc}
        -\sum\limits^{n}_{k=1}m_{k1}^{X}	& m_{12}^{X} & \cdots & m_{1p}^{X}\\
        m_{21}^{X}	& -\sum\limits^{n}_{k=1}m_{k2}^{X} & \cdots  & m_{2p}^{X}\\
        \vdots	&  & \ddots & \vdots\\ 
        m_{n1}^{X} & m_{n2}^{X} & &  -\sum\limits^{n}_{k=1}m_{kn}^{X}
    \end{array}\right).
\end{equation}
Movement matrices of the form \eqref{eq:movement} enjoy a wide variety of properties useful to the analysis of metapopulation systems \cite{ArinoBajeuxKirkland2019}, which will be used later in the analysis.
Denote $\circ$ the Hadamard product.
Then the vector form of \eqref{sys:general_form} is
\begin{subequations}
	\label{sys:vector-form}
	\begin{align}
		\frac{d}{dt}\bS &= 
		\bb+\bvarepsilon\bR-\blambda\circ\bS
		-\left(\bv+\bd^H\right)\bS+\M^S\bS 
		\label{sys:vector-form-dS} \\ 
		\frac{d}{dt}\bI &= 
		\blambda\circ\bS-\left(\bgamma+\bdelta+\bd^H\right)\bI 
		+\M^I\bI 
		\label{sys:vector-form-dI} \\ 
		\frac{d}{dt}\bR &=
		\bgamma\bI+\bv\bS-\left(\bvarepsilon+\bd^H\right)\bR +\M^R\bR 
		\label{sys:vector-form-dR} \\ 
		\frac{d}{dt}\bW &= 
		\bzeta\bI-\bd^C\bW + \M^W\bW,
		\label{sys:vector-form-dW} 
	\end{align}
with force of infection
\begin{equation}
	\blambda = \bbeta_I\bI + \bbeta_W\bW.
\end{equation}
\end{subequations}
Note that we could also write $\blambda=\diag\left(\bbeta_I\bI + \bbeta_W\bW\right)$, thereby avoiding the use of the Hadamard product $\blambda\circ\bS$, but since the vector form \eqref{sys:vector-form} is used only when $\bI=\b0$ and therefore $\blambda=\b0$, this does not matter.

%%%%%%%%%%%%%%%%%%%%%%%%%%%%%%%%%%%%%%%%%%%%
%%%%%%%%%%%%%%%%%%%%%%%%%%%%%%%%%%%%%%%%%%%%%
\section{Mathematical analysis of the model}
\label{sec:analysis}

\subsection{The model is well-posed}
First, since this is a substantial variation on \cite{tien2010multiple}, we need to ensure that the model makes sense as an epidemiological model: its solutions exist uniquely, the positive orthant is invariant under the flow of the system and solutions are bounded.
Existence and uniqueness follows evidently from the fact that the differential equations in \eqref{sys:general_form} are multivariate polynomials and hence $C^\infty$.
Recall that we denote $\bZ(t)$ the vector of state variables.
\begin{lemma}\label{lemma:invariance}
The nonnegative orthant $\mathbb{R}^{4n}_{+}$ is positively invariant under the flow of \eqref{sys:general_form}: if $\bZ(0)>\b0$ with $\bS(0)\gg\b0$, then $\bZ(t)\geq\b0$ and $\bS(t)\gg 0$ for all $t\geq 0$.
\end{lemma}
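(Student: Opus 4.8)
The plan is to verify the standard \emph{quasi-positivity} (or essential nonnegativity) condition for the vector field defining \eqref{sys:general_form} and then invoke the classical invariance result that a system $\dot{\bZ}=f(\bZ)$ whose right-hand side satisfies this condition leaves $\IR^{4n}_+$ positively invariant. Writing $Z_\ell$ for a generic component, the requirement is that the $\ell$-th component of $f$ be nonnegative whenever $Z_\ell=0$ and $\bZ\geq\b0$; this says the vector field never points strictly out of the orthant on its boundary faces. The structural facts that make this work are that each movement matrix $\M^X$ is a Metzler matrix (its off-diagonal entries $m_{ij}^X$ are nonnegative and only the diagonal entries $m_{ii}^X$ are negative), and that the recruitment, recovery, vaccination and shedding coefficients are all nonnegative.

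First I would check the four families of boundary faces. On the face $S_i=0$ (with $\bZ\geq\b0$), the diagonal movement contribution $m_{ii}^S S_i$ vanishes, so \eqref{sys:general_form_dS} reduces to $b_i+\varepsilon_i R_i+\sum_{j\neq i}m_{ij}^S S_j\geq 0$. On the face $I_i=0$, the force of infection collapses to $\lambda_i=\beta_i^W W_i$, so the incidence term $\lambda_i S_i=\beta_i^W W_i S_i$ is still nonnegative and \eqref{sys:general_form_dI} reduces to $\beta_i^W W_i S_i+\sum_{j\neq i}m_{ij}^I I_j\geq 0$; the key point here is that the nonlinearity $\lambda_i S_i$ carries no term that could turn negative once $I_i=0$. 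The faces $R_i=0$ and $W_i=0$ are handled identically, giving $\gamma_i I_i+v_i S_i+\sum_{j\neq i}m_{ij}^R R_j\geq 0$ and $\zeta_i I_i+\sum_{j\neq i}m_{ij}^W W_j\geq 0$, respectively. This establishes quasi-positivity and hence $\bZ(t)\geq\b0$ for all $t\geq 0$.

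It remains to upgrade nonnegativity to strict positivity of the susceptible components. Here I would use a variation-of-constants argument on \eqref{sys:general_form_dS}. Write it as $\dot S_i = g_i(t)-h_i(t)S_i$, where $g_i(t)=b_i+\varepsilon_i R_i(t)+\sum_{j\neq i}m_{ij}^S S_j(t)\geq 0$ is a nonnegative forcing (using the nonnegativity just proved) and $h_i(t)=\lambda_i(t)+v_i+d_i^H-m_{ii}^S$ is continuous, hence bounded on any compact time interval on which the solution exists. Integrating then gives
\[
S_i(t)=S_i(0)\exp\!\left(-\int_0^t h_i(s)\,ds\right)+\int_0^t g_i(\tau)\exp\!\left(-\int_\tau^t h_i(s)\,ds\right)d\tau\geq S_i(0)\exp\!\left(-\int_0^t h_i(s)\,ds\right).
\]
Since $S_i(0)>0$ and the exponential is strictly positive, $S_i(t)>0$ for all $t\geq 0$, which is precisely the claimed $\bS(t)\gg\b0$, independently of whether $b_i$ is strictly positive.

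The routine parts are the four sign checks; the only points requiring care are that the force-of-infection nonlinearity does not destroy quasi-positivity on the face $I_i=0$ (it does not, since $\lambda_i S_i$ reduces to $\beta_i^W W_i S_i\geq 0$ there) and that the coefficient $h_i$ be well defined, which follows from local existence and continuity of the solution. The main obstacle, such as it is, is therefore conceptual rather than computational: recognising that strict positivity of $\bS$ cannot be read off the quasi-positivity argument alone and must be obtained separately from the exponential lower bound above.
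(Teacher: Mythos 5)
Your proof is correct, and the underlying idea---checking the sign of each component of the vector field on the corresponding face of $\IR^{4n}_+$---is the same one the paper uses. The execution differs in two ways worth noting. First, the paper runs a ``first exit time'' contradiction argument by hand: it supposes some component is the first to leave the orthant at a time $t_1$, examines which family ($S$, $I$, $R$ or $W$) could do so, and rules each out; you instead verify quasi-positivity once and invoke the classical invariance theorem, which packages the same boundary computations more cleanly and avoids the paper's somewhat delicate discussion of tangency versus sign change. Second, and more substantively, your treatment of the strict positivity $\bS(t)\gg\b0$ is tighter than the paper's. The paper argues that $S_k$ cannot become tangent to zero because $S_k'(t)=b_k+\varepsilon_k R_k+\sum_{j\neq k}m_{kj}^S S_j>0$ near $t_1$, which leans on $b_k>0$ and on the claim that vanishing requires $S_k'<0$ from the left; your variation-of-constants bound $S_i(t)\geq S_i(0)\exp\bigl(-\int_0^t h_i(s)\,ds\bigr)$ gives an explicit strictly positive lower bound from $S_i(0)>0$ alone, needs no assumption on $b_i$, and sidesteps the tangency reasoning entirely. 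The only point to make explicit is the one you already flag: the forcing $g_i$ is nonnegative because the orthant invariance has been established first, so the two steps must be taken in that order.
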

\begin{proof}
Assume initial conditions are $\bZ(0)>\b0$ with $\bS(0)\gg\b0$.
Let $t_1$ be the first time at which $X_k(t_1)<0$, for some $k\in K\subseteq\{1,\ldots,n\}$ and $X\in\{S,I,R,W\}$.
The question then is: which state variable exits the nonnegative orthant first?

Let us start by showing that, for the stated initial conditions, $S$ components cannot be the first to leave the nonnegative orthant; they cannot even limit on its corresponding faces.
Start with \eqref{sys:general_form_dS} and recall that it is assumed that $\bS(0)\gg\b0$.
Assume that there exist $K\subseteq\{1,\ldots,n\}$ and $t_1>0$ such that $S_{k}(t_1)= 0$ for $k\in K$, with $t_1$ the first time at which they vanish.
Then, for each $k\in K$, there are two cases: either $S_k$ becomes negative at $t_1$ (and then $S'_k(t_1)<0$) or $S_k$ becomes equal to zero at $t_1$, for which it first must become tangent to $0$ as $t\to t_1$ from the left, i.e., $S'_{k}(t)<0$ for $t\to t_1$ from the left. 
Excluding the latter also excludes the former, so let us consider tangency.
Since the $S_k$ are the first $S$ to become $0$, $S_{i}(t)>0$ for all $i\not\in K$ as $t$ approaches $t_1$ from the left.
Then, as $t$ approaches $t_1$ from the left, it follows from the nonnegativity of initial conditions and strong positivity of those of $\bS$, and the fact that $t_1$ is the first time at which any solution leaves the nonnegative orthant, that \eqref{sys:general_form_dS} takes the form
\[
S'_{k}(t)= b_k+\varepsilon_k R_k(t) +\sum_{\stackrel{j=1}{j\neq k}}^{n}m^{S}_{kj} S_j(t) > 0, 
\]
which contradicts the assumption $S'_{k}(t)<0$ as $t\to t_1$ from the left. 
Note that this is true even if $R_k$ were to hit their corresponding orthant faces at the same time $t_1$.
Therefore, $S_k$ cannot become tangent to $0$ at $t_1$ and the $S$ components cannot be the first to leave $\IR_+^{4n}$.
%Furthermore, since $\bS(0)\gg\b0$, $\bS(t)\gg\b0$ for all $t$.

So now suppose there exists $K\subseteq\{1,\ldots,k\}$ and $t_1>0$ such that $I_{k}(t_1)= 0$, $I'_{k}(t_1)<0$, and $I_{i}(t_1)>0$ for all $i\not\in K$.
From the previous point, $S$ components cannot be the first to leave the nonnegative orthant, but they could do so at the same time as, now, the $I_k$ for $k\in K$; in any event, this means that $S_k(t_1)\geq 0$.
From \eqref{sys:general_form_dI}, it then follows that
\begin{equation*}
I'_{k}(t_1)= \beta_{W_k}W_k(t_1) S_k(t_1) +\sum\limits_{\stackrel{j=1}{j\neq k}}^{n}m^I_{kj}I_j(t_1) \geq 0,
\end{equation*}
which contradicts $I'_{k}(t_1)<0$. 
Thus, $I$ components cannot be the first to leave $\IR_+^{4n}$ either. By a similar reasoning, neither can the $R$ or $W$ components.
As a consequence, a solution starting in the nonnegative orthant remains there. Furthermore, $\bS(t)\gg 0$ for all $t\geq 0$.
\end{proof}

\begin{lemma}\label{lemma:boundedness-1}
The total human and bacterial populations are bounded.
\end{lemma}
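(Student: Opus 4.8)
The plan is to work with the aggregated quantities $N^H(t)=\sum_{i=1}^n N_i^H(t)=\sum_{i=1}^n\bigl(S_i+I_i+R_i\bigr)$ and $W(t)=\sum_{i=1}^n W_i(t)$ rather than patch by patch, the point being that movement merely redistributes individuals and water between patches without creating or destroying them. First I would differentiate $N^H$ by summing \eqref{sys:general_form_dS}, \eqref{sys:general_form_dI} and \eqref{sys:general_form_dR} over the three human compartments in a fixed patch $i$. The within-patch transfer terms cancel in pairs: the force-of-infection term $\lambda_i S_i$ cancels between \eqref{sys:general_form_dS} and \eqref{sys:general_form_dI}, the vaccination term $v_iS_i$ between \eqref{sys:general_form_dS} and \eqref{sys:general_form_dR}, the recovery term $\gamma_i I_i$ between \eqref{sys:general_form_dI} and \eqref{sys:general_form_dR}, and the waning term $\varepsilon_iR_i$ between \eqref{sys:general_form_dR} and \eqref{sys:general_form_dS}. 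What survives in patch $i$ is $b_i-d_i^H N_i^H-\delta_i I_i$ together with the three movement sums.

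The next step is to sum over $i$ and kill the movement contributions. Here I would invoke the defining structural property of the movement matrices \eqref{eq:movement}, namely that each column of $\mathcal{M}^X$ sums to zero (this is exactly the bookkeeping identity $m_{ii}^X=-\sum_{j\neq i}m_{ji}^X$, and is among the properties of such matrices recalled in \cite{ArinoBajeuxKirkland2019}). Consequently $\sum_{i=1}^n\sum_{j=1}^n m_{ij}^X X_j=\sum_{j=1}^n\bigl(\sum_{i=1}^n m_{ij}^X\bigr)X_j=0$ for every compartment $X$, so all movement terms drop out of $\tfrac{d}{dt}N^H$, yielding the clean identity
\begin{equation*}
\frac{d}{dt}N^H=\sum_{i=1}^n b_i-\sum_{i=1}^n d_i^H N_i^H-\sum_{i=1}^n\delta_i I_i.
\end{equation*}

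With nonnegativity of solutions in hand from Lemma~\ref{lemma:invariance}, I would now turn this identity into a scalar differential inequality. Writing $b:=\sum_i b_i$ and $\underline d^H:=\min_i d_i^H>0$, nonnegativity gives $\sum_i\delta_i I_i\ge 0$ and $\sum_i d_i^H N_i^H\ge\underline d^H N^H$, whence $\tfrac{d}{dt}N^H\le b-\underline d^H N^H$. A standard scalar comparison argument then gives $N^H(t)\le\max\{N^H(0),\,b/\underline d^H\}$ for all $t\ge 0$, with $\limsup_{t\to\infty}N^H(t)\le b/\underline d^H$; in particular each $N_i^H$, and hence each of $S_i$, $I_i$, $R_i$, is bounded.

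Finally, for the bacterial load I would repeat the aggregation on \eqref{sys:general_form_dW}. The movement terms again vanish by the zero-column-sum property, leaving $\tfrac{d}{dt}W=\sum_i\zeta_i I_i-\sum_i d_i^C W_i$. Bounding the source via $\sum_i\zeta_i I_i\le(\max_i\zeta_i)N^H$ and the sink via $\sum_i d_i^C W_i\ge(\min_i d_i^C)W$ produces a second inequality of the same type, $\tfrac{d}{dt}W\le(\max_i\zeta_i)\,\overline{N^H}-(\min_i d_i^C)\,W$ with $\overline{N^H}$ the bound just obtained, from which boundedness of $W$ follows by the same comparison lemma. I do not expect a genuine obstacle here: the only model-specific content is the pairwise cancellation of the transition terms together with the zero-column-sum identity for $\mathcal{M}^X$, and once the two aggregated differential inequalities are in place the boundedness conclusions are routine.
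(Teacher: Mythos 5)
Your proposal is correct and follows essentially the same route as the paper: aggregate over patches so that the movement terms cancel (the zero-column-sum property of $\mathcal{M}^X$), reduce to the scalar differential inequalities $\tfrac{d}{dt}N^H\le b-\underline d^H N^H$ and $\tfrac{d}{dt}W\le K-\underline d^C W$, and conclude by comparison. Your write-up is in fact slightly more explicit than the paper's about why the within-patch transfer terms and the movement sums vanish, and about how the bound on $\sum_i\zeta_i I_i$ is obtained from the human-population bound.
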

\begin{proof}
Consider the evolution of the total human population $T_H=\sum_{i=1}^n N_i^H$ in \eqref{sys:general_form}.
It is easy to check that all sums describing movement cancel out.
Mechanistically speaking, the rate at which $S$ individuals move from, say, patch 1 to patch 2 has a negative sign when they leave patch 1 and a positive sign when they arrive in patch 2.
Since there is no death during transport and that transport is instantaneous, this implies that these quantities cancel out.
The same is true of $I$ and $R$ individuals.
Thus,
\[
    T_H'=\sum_{i=1}^n b_i -\sum_{i=1}^n d^H_iN_i -\sum_{i=1}^n \delta_iI_i \leq \sum_{i=1}^n b_i -\sum_{i=1}^n d^H_iN_i.
\]
Suppose $d^H_i>0$ for all $i$ (which is quite natural). 
Let $d^H_{min}=\min\limits_{i=1,\ldots,n}\bd^H_i$, then
\[
T_H'\leq \sum_{i=1}^n b_i -d^H_{min}N_H,
\]
which implies
\[
    T_H\leq e^{-td^H_{min}}\left[ N_H(0)+ B \int_{0}^{t} e^{d^H_{min} \tau } d\tau\right] 
    \leq e^{-t d^H_{min}}\left[ N_H(0)-\frac{1}{d^H_{min}}B \right]+\frac{B}{d^H_{min}},
\]
with $B =\sum_{i=1}^n b_i$.
So
\begin{itemize}
    \item[--] if $N_H(0)\leq B/d^H_{min}$ then $0<T_N(t)\leq B /d^H_{min}$,
    \item[--] if $N_H(0)>B/d^H_{min}$ then $T_H(t)\leq T_H(0)$ for all $t\in\IR_+$.
\end{itemize}
In other words, for all $t\in\IR_+$,
\begin{equation*}
 0\leq T_H(t)\leq \max \left(T_H(0),\dfrac{B}{d^H_{min}} \right).
\end{equation*}

Now denote $T_B=\sum_{i=1}^nW_i$ the total bacterial concentration. We have
\begin{align*}
    T'_B(t)=\sum_{i=1}^{n}\bzeta_i I_i-d_{min}^C T_B\leq K-d_{min}^C T_B, 
\end{align*}
because $I_i$ is bounded.
i.e.
\[
T_B(t)= e^{-t d^C_{min}}\left[ T_B(0)-\frac{K}{d^C_{min}} \right]+\frac{K}{d^C_{min}}.
\]
It follows that
\[
0\leq T_B\leq 
\max\left(T_B(0), \dfrac{K}{d_{min}^C}\right).
\]
By Lemma~\ref{lemma:invariance}, it follows that all variables are bounded.
\end{proof}
From the above, all solutions to \eqref{sys:general_form} eventually enter the set
\begin{equation}\label{eq:domain}
    \bOmega=\left\lbrace \bX=(\bS, \bI, \bR, \bW)\in \mathbb{R}^{4n}; 
    \quad 
    0\leq  T_H \leq \dfrac{B }{d^H_{min}}
    \text{ and } 
    0\leq T_B\leq 
\dfrac{K}{d_{min}^C}
 \right\rbrace,
\end{equation}
which is furthermore positively invariant under the flow of \eqref{sys:general_form}. 

One particular case of interest arises when the system has no disease or has the disease at a stable equilibrium and movement rates of humans are all equal.
\begin{proposition}
Assume that disease is at an equilibrium, i.e., $\bI=\bI^\star$.
Then, if movement rates are assumed to be equal for all disease states, i.e., $\mathcal{M} := \mathcal{M}^S = \mathcal{M}^I = \mathcal{M}^R$, then the total human population in each patch converges to $(\bd^H-\M)^{-1}(\bb-\bm{\delta}\bI^\star)$.
\end{proposition}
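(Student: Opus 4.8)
The plan is to reduce the claim to the asymptotic behaviour of a single affine linear system governing the vector of patch totals $\bN^H=(N_1^H,\ldots,N_n^H)$. First I would add the three human equations \eqref{sys:general_form_dS}--\eqref{sys:general_form_dR} in each patch. All transfer terms \emph{between} human compartments cancel pairwise: the infection terms $\lambda_i S_i$, the vaccination terms $v_i S_i$, the recovery terms $\gamma_i I_i$ and the loss-of-immunity terms $\varepsilon_i R_i$ each appear with opposite signs in two of the equations. What remains in patch $i$ is recruitment $b_i$, natural mortality $-d_i^H N_i^H$, disease-induced mortality $-\delta_i I_i$, and the three movement sums. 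Because the hypothesis forces $\M^S=\M^I=\M^R=\M$, these three sums collapse into the single term $\sum_j m_{ij}(S_j+I_j+R_j)=\sum_j m_{ij}N_j^H$, that is, $(\M\bN^H)_i$. Freezing the disease component at $\bI=\bI^\star$ then yields the closed, \emph{linear} system
\[
\frac{d}{dt}\bN^H = (\bb-\bdelta\bI^\star) + (\M-\bd^H)\bN^H,
\]
whose inhomogeneous term $\bb-\bdelta\bI^\star$ is constant.

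Second, I would identify the candidate limit. Setting the right-hand side to zero gives $(\bd^H-\M)\bN^H = \bb-\bdelta\bI^\star$, so the unique equilibrium is $\bN^{H\star}=(\bd^H-\M)^{-1}(\bb-\bdelta\bI^\star)$, precisely the claimed value, provided $\bd^H-\M$ is invertible. To obtain both invertibility and convergence it suffices to prove that $A:=\M-\bd^H$ is Hurwitz. This is the heart of the argument and the step I expect to require the most care. I would exploit the structure of movement matrices: $\M$ has nonnegative off-diagonal entries and zero column sums, hence is Metzler with spectral abscissa $0$. Subtracting the positive diagonal matrix $\bd^H$ (assuming $d_i^H>0$, as in Lemma~\ref{lemma:boundedness-1}) leaves the off-diagonal entries unchanged, so $A$ is still Metzler, but now every column of $A$ sums to $-d_j^H<0$. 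Equivalently, $\bd^H-\M$ is a Z-matrix that is strictly column diagonally dominant with positive diagonal; such a matrix is a nonsingular M-matrix, so $\bd^H-\M$ is invertible and $A=-(\bd^H-\M)$ is Hurwitz. These facts are exactly the movement-matrix properties recorded in \cite{ArinoBajeuxKirkland2019}, which could be cited in lieu of a direct diagonal-dominance computation.

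Finally, with $A$ Hurwitz the affine system has a globally asymptotically stable equilibrium: writing the explicit solution $\bN^H(t)=\bN^{H\star}+e^{At}\bigl(\bN^H(0)-\bN^{H\star}\bigr)$ and using $e^{At}\to\b0$ as $t\to\infty$ shows that $\bN^H(t)\to\bN^{H\star}$ for every initial condition, which is the assertion. The only genuinely nontrivial ingredient is the Hurwitz/M-matrix property of the previous paragraph; the cancellations and the convergence argument are routine once that is in place.
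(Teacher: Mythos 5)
Your proposal is correct and follows essentially the same route as the paper: sum the three human equations, use the equal-movement hypothesis to collapse the coupling into a single term $\M\bN_H$, freeze $\bI=\bI^\star$, and read off convergence from the explicit solution of the resulting affine linear system. The only difference is that you explicitly justify the invertibility of $\bd^H-\M$ and the Hurwitz property of $\M-\bd^H$ via the column-sum/M-matrix argument, a step the paper's proof leaves implicit (relying on the movement-matrix properties of \cite{ArinoBajeuxKirkland2019}); this is a welcome addition rather than a deviation.
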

%%%%%%%%%%%%%%

\begin{proof}
Let $\bN_H=\left(N_1^H,\ldots,N_n^H\right)$ be the vector of total human population in \eqref{sys:general_form}. 
Under the assumption of equal movement rates, we have
\begin{equation}
    \label{eq:dNHdt}
    \frac{d}{dt}\bN_H = \bb + \M\bN_H - \bd^H\bN_H-\bdelta\bI.
\end{equation}
By the assumption that there is no disease or the disease is at a stable equilibrium, we can substitute $\bI^\star$ for $\bI$, with $\bI^\star=\b0$ if there is no disease.
Then \eqref{eq:dNHdt} takes the form
\begin{equation}\label{eq:dNHdt_2}
\frac{d}{dt}\bN_H = \bb-\bm{\delta}\bI^\star -(\bd^H -\M)\bN_H.
\end{equation}
This equation admits the solution: 
\begin{equation}
\bN_H(t) = e^{-(\bd^H-\M)t}\left[\bN_H(0)-{(\bd^H-\M)}^{-1}(\bb-\bm{\delta}\bI^\star) \right]+(\bd^H-\M)^{-1}(\bb-\bm{\delta}\bI^\star).
\end{equation}
So, $\bN_H(t)$ converges to $(\bd^H-\M)^{-1}(\bb-\bm{\delta}\bI^\star)$.
\end{proof}

%%%%%%%%%%%%%%%%%%%%%%%
\subsection{Case of isolated locations}
\label{subsec:analysis-single-location}
Considering the system in the absence of movement between patches is useful for various purposes. 
In particular, it allows the parametrisation of the model by establishing the reproduction number for patches in isolation.
Although the reproduction number for the whole system is more complex than individual reproduction numbers, this is still useful.

In the absence of movement, the model takes for form, for a given patch $i=1,\ldots,n$
\begin{subequations}
\label{sys:general_form_1}
\begin{align}
    \frac{d}{dt}S_i &= 
        b_i+\varepsilon_i R_i-\lambda_i S_i
        -(v_i+d_i^H)S_i
        \label{sys:general_form_dS_1} \\ 
    \frac{d}{dt}I_i &= 
        \lambda_i S_i-(\gamma_i+\delta_i+d_i^H)I_i 
        \label{sys:general_form_dI_1} \\ 
    \frac{d}{dt}R_i &=
        \gamma_i I_i+v_i S_i-(\varepsilon_i+d_i^H)R_i
        \label{sys:general_form_dR_1} \\ 
    \frac{d}{dt}W_i &= 
        \zeta_i I_i-d_i^C W_i,
        \label{sys:general_form_dW_1} 
\end{align}
where 
\begin{equation}
    \lambda_i=\beta_{i}^II_i+\beta_{i}^WW_i.
\end{equation}
\end{subequations}

%%%%%%%%%%%%%%%%%%%%%%%%
%\subsubsection{The disease-free equilibrium in an isolated location}
Suppose that $I_i=W_i=0$. Then we have the disease-free equilibrium
\begin{equation}
    \bE_{0}^{(i)}:=(S_i^\star,I_i^\star,R_i^\star,W_i^\star)
    = \left(
        \dfrac{\varepsilon_i+d_i^H}{\varepsilon_i+v_i+d_i^H}\;\dfrac{b_i}{d_i^H}, 
        0, 
        \dfrac{v_i}{\varepsilon_i+v_i+d_i^H}\;\dfrac{b_i}{d_i^H}, 
        0 
    \right).
\end{equation}

%%%%%%%%%%%%%%%%%%%%%%%%
%\subsubsection{The basic reproduction number in an isolated location}
To compute the basic reproduction number, we use the method of \cite{VdDWatmough2002}.
Let $\mathcal{F}_i = \left(\lambda_i S_i,0 \right)^T$ contain new infections in the infected classes $\{I_i,W_i\}$ and $\mathcal{V}_i$ contain other flows within and outside the infected classes $\{I_i,W_i\}$, with a negative sign, i.e.,
\[
\mathcal{V}_i=
\begin{pmatrix}
(\gamma_i+\delta_i +d_i^H)I_i \\ 
-\zeta_i I_i + d_i^C W_i    
\end{pmatrix}.
\]
The matrix of new infections $F_i$ and the transfer matrix between compartments $V_i$ are then the Jacobian matrices obtained by differentiating $\mathcal{F}_i$ and $\V_i$ with respect to the infected variables $\{I_i,W_i\}$ and evaluating at the disease-free equilibrium $\bE_0^{(i)}$, i.e.,
\begin{equation}\label{eq:F-matrix-1-patch}
    F_i=
    \begin{pmatrix}
        \beta_{i}^I S_i^{\star}	&  \beta_{i}^W S_i^{\star} \\
        0 &  0
    \end{pmatrix}, \quad  V_i=
    \begin{pmatrix}
        \gamma_i +\delta_i + d_i^H	&  0 \\
        - \zeta_i &  d_i^C
    \end{pmatrix}.
\end{equation}
We have
\[
V^{-1}=
\frac{1}{d_i^C\left(\gamma_i+\delta_i+d_i^H\right)}
\begin{pmatrix}
        d_i^C   &   0 \\
        \zeta_i &   \gamma_i +\delta_i + d_i^H
\end{pmatrix}
\]
and the next generation matrix is then given by 
\[ 
F_iV^{-1}_i
=
\begin{pmatrix}
    \dfrac{\left(d^C_i\beta_i^I+\zeta_i\beta_i^W\right)S^\star_i}
        {d^C\left( \gamma_i +\delta_i + d^H_i\right)}	& 
    \dfrac{\beta_i^W S^\star_i}{d^C_i} \\ 
    0 &  0
\end{pmatrix}.
\]
The basic-reproduction number for patch $i=1,\ldots,n$ in the absence of movement is the spectral radius of this matrix and therefore takes the form
\begin{equation}\label{eq:R0-isolated}
    \mathcal{R}_{0}^{(i)}= 
    \dfrac{d_i^C\beta_{i}^I +\zeta_i\beta_{i}^W}
    {d_i^C\left(\gamma_i +\delta_i + d_i^H\right)}
    \dfrac{\varepsilon_i+d_i^H}{\varepsilon_i+v_i+d_i^H}\;\dfrac{b_i}{d_i^H}.
\end{equation}
We do not derive these results specifically for the isolated location case, but Theorems~\ref{th:DFE-LAS} and \ref{th:E0-GAS} derived later for the general model with movement both apply to an isolated location, so that in the absence of connection to other locations, the disease-free equilibrium $\bE_0^{(i)}$ is globally asymptotically stable when $\R_0^{(i)}<1$ and unstable when $\R_0^{(i)}>1$.

%%%%%%%%%%%%%%%%%%%%%%%%%%%
%%%%%%%%%%%%%%%%%%%%%%%%%%%
\subsection{The disease-free equilibrium}
\label{sec:DFE-full-system}
We now return to the full system \eqref{sys:general_form} with movement between patches.
At the disease-free equilibrium (DFE), $\bI=\bW=\b0$. 
As a consequence, at the DFE, the vector form \eqref{sys:vector-form} of \eqref{sys:general_form} reduces to the linear system
\begin{equation}\label{sys:at_DFE}
\frac{d}{dt}
\begin{pmatrix}
    \bS \\ \bR
\end{pmatrix}
=
\begin{pmatrix}
    \bb \\ \b0
\end{pmatrix} +
\begin{pmatrix}
    \M^S-\bv-\bd^H & \bvarepsilon \\
    \bv & \M^R-\bvarepsilon-\bd^H
\end{pmatrix}
\begin{pmatrix}
    \bS \\ \bR
\end{pmatrix}.
\end{equation}
Solving for equilibria of \eqref{sys:at_DFE} means finding $\bS^\star$ and $\bI^\star$ such that
\begin{equation}\label{sys:DFE}
\bm{\Sigma}_{DFE}
\begin{pmatrix}
    \bS^\star \\ \bR^\star
\end{pmatrix}
=
\begin{pmatrix}
    \bb \\ \b0
\end{pmatrix},
\end{equation}
where
\begin{equation}
\label{eq:matrix-for-DFE}
    \bm{\Sigma}_{DFE}:=\begin{pmatrix}
    \bv+\bd^H-\M^S & -\bvarepsilon \\
    -\bv & \bvarepsilon+\bd^H-\M^R
    \end{pmatrix}.
\end{equation}
This nonhomogeneous linear system has a unique solution if $\bm{\Sigma}_{DFE}$ is invertible.
This is guaranteed by \cite[Proposition 3]{ArinoBajeuxKirkland2019}, with $\bm{\Sigma}_{DFE}^{-1}>\b0$ if $\diag(\bm{v}+\bm{d}^H)\gg\b0$ and $\bm{\Sigma}_{DFE}^{-1}\gg\b0$ if $\diag(\bm{v}+\bm{d}^H)>\b0$ but $\M^S$ and $\M^R$ irreducible.

Using the formula for the inverse of a block $2\times 2$ matrix \cite{HornJohnson2013}, we then obtain an expression for the DFE.
We have
\[
\bm{\Sigma}_{DFE}^{-1}
= \begin{pmatrix}
    \bm{D}_{11} & \bm{\star} \\ \bm{D}_{21} & \bm{\star}
\end{pmatrix},
\]
with
\[
\bm{D}_{11} = 
\left(
\bv+\bd^H-\M^S-\bvarepsilon\left(\bvarepsilon+\bd^H-\M^R\right)^{-1}\bv
\right)^{-1}
\]
and
\[
\bm{D}_{21} = 
\left(\bvarepsilon+\bd^H-\M^R\right)^{-1}\bv\bm{D}_{11}.
\]
(The other two blocks are not shown as they do not play a role in computing the equilibria.)
From the previous discussion, all $\bm{D}_{ij}$ matrices are positive.
Hence,
\begin{equation*}
\bS^\star= \left(
\bv+\bd^H-\M^S-\bvarepsilon\left(\bvarepsilon+\bd^H-\M^R\right)^{-1}\bv
\right)^{-1}
\bb
\end{equation*}
and
\begin{equation*}
\bR^{\star}=\left(\bvarepsilon+\bd^H-\M^R\right)^{-1}\bv
\left(
\bv+\bd^H-\M^S-\bvarepsilon\left(\bvarepsilon+\bd^H-\M^R\right)^{-1}\bv
\right)^{-1}
\bb.
\end{equation*}
Thus,
\begin{equation}\label{eq:DFE}
\bm{E}^\star_{0}= \left(\bS^\star,\b0_{\mathbb{R}^n},\bR^\star,\b0_{\mathbb{R}^n}\right) \in \mathbb{R}^{4n}
\end{equation}
is the disease-free equilibrium of \eqref{sys:general_form}.

\subsection{Basic reproduction number}
Let $\mathcal{F} = (\lambda_1S_1,...,\lambda_nS_n ,0,..., 0)^T $ represent new infections in infected classes $I_1, ..., I_n, W_1, ..., W_n$ and $\mathcal{V}$ represent other flows within and outside the infected classes $I_1, ..., I_n, W_1,..., W_n$ (note that $\mathcal{V}$ has a negative sign).

We recall that the basic reproduction number is the average number of secondary infections produced by an infected individual when introduced into a population of susceptible individuals. In our case, $\mathcal{V}$ is given by:
\[\mathcal{V}=-\left( \begin{array}{c}
		-(\gamma_1+\delta_1 +d^H_1)I_1 +\sum\limits_{j=1}^{n}m^I_{1j}I_j\\
		% -(\gamma_2+\delta_2+d^H_2)I_2  +\sum\limits_{j=1}^{n}m^I_{2j}I_j\\  
        \vdots\\
		-(\gamma_n+\delta_n+d^H_n)I_n +\sum\limits_{j=1}^{n}m^I_{nj}I_j\\ 
		\zeta_1 I_1 - d^C_1W_1+\sum\limits_{j=1}^{n}m^{W}_{1j} W_j\\
		% \zeta_2 I_2 -d^C_1W_2+\sum\limits_{j=1}^{n}m^{W}_{2j} W_j\\  
        \vdots\\
		\zeta_n I_n -d^C_1W_n+\sum\limits_{j=1}^{n}m^{W}_{nj} W_j\\
	\end{array}\right). \]
The matrix of new infections 
$F$ and the transfer matrix between compartments 
$V$ are then the Jacobian matrices obtained by differentiating $\mathcal{F}$ and 
$\V$ with respect to the infected variables and evaluating at the disease-free equilibrium, i.e.,
\[
		 F=D\mathcal{F}(\bE^{\star}_{0})=
         \left.\dfrac{\partial\lambda_i S_i}{\partial (I_i,W_i)}\right|_{\bE^\star_{0}} \text{ and }  V=D\mathcal{V}(\bE^{\star}_{0}).
\]
We have
	\begin{equation*} F=\begin{pmatrix}
			\dfrac{\partial\lambda_1S^\star_1}{\partial I_1}	& \hdots&\dfrac{\partial\lambda_1S^\star_1}{\partial I_n}&\dfrac{\partial\lambda_1S^\star_1}{\partial W_1}& \hdots&\dfrac{\partial\lambda_1S^\star_1}{\partial W_n}	\\	
			\vdots&   & \vdots& \vdots& & \vdots\\
			\dfrac{\partial\lambda_nS^\star_n}{\partial I_1}	& \hdots&\dfrac{\partial\lambda_nS^\star_n}{\partial I_n}&\dfrac{\partial\lambda_nS^\star_n}{\partial W_1}& \hdots&\dfrac{\partial\lambda_nS^\star_n}{\partial W_n}\\
			0& \hdots& 0 & 0 & \hdots&0\\
			\vdots &  & \vdots & \vdots& & \vdots\\
			0& \hdots& 0 & 0 & \hdots&0
		\end{pmatrix},
	\end{equation*}
where $\bS^{\star}=(S^{\star}_1,\ldots,S^{\star}_n)^T$ is the value of $\bS$ at the disease-free equilibrium.
Since $\partial\lambda_i/\partial I_j$ and $\partial\lambda_i/\partial W_j$ are only nonzero and equal to $\beta_{I_i}$ and $\beta_{W_i}$, respectively, when $i=j$, we have
\begin{equation}\label{eq:F_matrix_comptact}
    F=
    \begin{pmatrix}
        F_{11}	& F_{12} \\
        \mathbf{0} &  \mathbf{0} 
    \end{pmatrix}
:=			
    \begin{pmatrix}
    \diag(\beta_{I_1}S^{\star}_1,\ldots,\beta_{I_n}S^{\star}_n) &
    \diag(\beta_{W_1}S^{\star}_1,\ldots,\beta_{W_n}S^{\star}_n) \\
    \mathbf{0} & \mathbf{0}
    \end{pmatrix}
\end{equation}
and
\begin{equation} \label{eq:V_matrix}
V= \left( \begin{array}{cc}
			V_{11}	& \mathbf{0}	\\
			V_{21} &  V_{22}
\end{array}\right)
:=
\left( \begin{array}{cc}
		\diag(\gamma_i+\delta_i+d^H_i)	& \mathbf{0} \\  
			-\diag(\zeta_i) &  \diag(d^C_i)
\end{array}\right)- \left( \begin{array}{cc}
			\mathcal{M}^{I}	& \mathbf{0}	\\ 
			\mathbf{0} &  \mathcal{M}^{W}
		\end{array}\right).
\end{equation}
  
Since matrices $V_{11}$ and $ V_{22}$ are $n\times n$ irreducible M-matrices,  they have positive inverses.
The basic reproduction number is thus defined as the spectral radius
\[
\mathcal{R}_0 = \rho\left(FV^{-1}\right).
\]
Since $V$ is block lower triangular, we have
\[
V^{-1} =
\begin{pmatrix}
    V^{-1}_{11} & \mathbf{0} \\
    V^{-1}_{21} & V^{-1}_{22}
\end{pmatrix},
\]
where the blocks $V_{ij}^{-1}$ are computed using the formula for inverses of block $2\times 2$ matrices.
It follows that
\[
FV^{-1}= 
\begin{pmatrix}
    F_{11}	& F_{12} \\ 
    \mathbf{0} &  \mathbf{0} 
\end{pmatrix}
\begin{pmatrix}
    V^{-1}_{11} & \mathbf{0} \\
    V^{-1}_{21} & V^{-1}_{22}
\end{pmatrix}
= 
\begin{pmatrix}
    F_{11}{V}_{11}^{-1}+F_{12}{V}_{21}^{-1}	& F_{12}{V}_{22}^{-1}	\\
    \mathbf{0} &  \mathbf{0} 
\end{pmatrix}.
\]
From this, $\R_0=\rho(FV^{-1})=\rho\left(F_{11}{V}_{11}^{-1}+F_{12}{V}_{21}^{-1}\right)$, so we need only compute $V_{11}^{-1}$, $V_{21}^{-1}$ and $V_{22}^{-1}$ (since the latter plays a role in the first two).
Clearly, $V_{11}^{-1}=(\diag(\gamma_i+\delta_i+d_i^H)-\M^I)^{-1}$ and $V_{22}^{-1}=-(\diag(d^C_i)-\M ^W)^{-1}$, 
while $V_{21}^{-1}=(\diag(d^C_i)-\M^W)^{-1}\ \diag(\zeta_i)\ (\diag(\gamma_i+\delta_i+d_i^H)-\M^ I)^{-1}$.
It follows that 
\begin{multline}\label{eq:R0-global}
\mathcal{R}_0 = \rho \bigg(
\diag(\beta_{I_i}S_i^\star)
\left(\diag(\gamma_i+\delta_i+d_i^H)-\M^I\right)^{-1}+\\
\diag(\beta_{W_i}S_i^\star)
\left(\diag(\zeta_i)-\M^W\right)^{-1}
\diag(\zeta_i)
\left(\diag(\gamma_i+\delta_i+d_i^H)-\M^ I\right)^{-1}
\biggr).
\end{multline}

This provides an expression that can be evaluated when parameter values are chosen for each patch. 
It is possible to derive an explicit expression for $\R_0$ in some cases when the number of patches is low, but there is very little value in doing so and expression \eqref{eq:R0-global} is the one we use in practice.
%%%%%%%%%%%%%%%%%%%%%%
%%%%%%%%%%%%%%%%%%%%%%
\subsection{Stability of the disease-free equilibrium}
First of all, from \cite[Theorem 2]{VdDWatmough2002}, we have the following result.
\begin{theorem}
\label{th:DFE-LAS}
    $\bE^\star_0$ is locally asymptotically stable if $\mathcal{R}_0 < 1$ and unstable if $\mathcal{R}_0 >1$.
\end{theorem}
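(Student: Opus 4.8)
The statement is a direct instance of the next-generation framework, so the plan is to verify that the decomposition $(\mathcal{F},\mathcal{V})$ introduced above meets the hypotheses (A1)--(A5) of \cite[Theorem 2]{VdDWatmough2002} and then to invoke that theorem verbatim. The infected compartments are $\{I_1,\ldots,I_n,W_1,\ldots,W_n\}$ and the disease-free subspace is the set where $\bI=\bW=\b0$. First I would confirm the nonnegativity conditions (A1)--(A4): writing $\mathcal{V}=\mathcal{V}^--\mathcal{V}^+$ with $\mathcal{V}^+$ collecting the inflow terms (the movement contributions $\sum_j m^I_{ij}I_j$ and $\sum_j m^W_{ij}W_j$, the only inflows into the infected block) and $\mathcal{V}^-$ the outflow and removal terms, every entry of $\mathcal{F}$, $\mathcal{V}^+$ and $\mathcal{V}^-$ is nonnegative on the nonnegative orthant; $\mathcal{F}_i=0$ for the noninfected compartments; $\mathcal{V}^-_i$ vanishes whenever the corresponding infected variable is $0$; and on the disease-free subspace both $\mathcal{F}$ and $\mathcal{V}^+$ vanish in the infected block because every such term is proportional to some $I_j$ or $W_j$. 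These are routine sign checks that follow immediately from the explicit forms of $\mathcal{F}$ and $\mathcal{V}$.

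The substantive hypothesis is (A5), that once new infections are switched off the infected subsystem is linearly stable at the DFE, equivalently that $V$ is a nonsingular M-matrix all of whose eigenvalues have positive real part. This I would obtain from the block lower-triangular structure of $V$ displayed in \eqref{eq:V_matrix}: it suffices to treat the diagonal blocks $V_{11}=\diag(\gamma_i+\delta_i+d_i^H)-\M^I$ and $V_{22}=\diag(d_i^C)-\M^W$. Each movement matrix $\M^X$ has nonnegative off-diagonal entries and columns summing to zero, so $V_{11}$ and $V_{22}$ are $Z$-matrices in which the diagonal exceeds the sum of the magnitudes of the off-diagonal entries of each column by the strictly positive amount $\gamma_i+\delta_i+d_i^H$, respectively $d_i^C$. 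They are therefore strictly column diagonally dominant, hence nonsingular M-matrices with spectra in the open right half-plane, as already recorded in the text, and $V$ inherits this property.

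It then remains to check the global hypothesis of \cite[Theorem 2]{VdDWatmough2002}, that $\bE^\star_0$ is a locally asymptotically stable equilibrium of the disease-free subsystem obtained by setting $\bI=\bW=\b0$. That subsystem is exactly the linear system \eqref{sys:at_DFE}, whose coefficient matrix is $-\bm{\Sigma}_{DFE}$ with $\bm{\Sigma}_{DFE}$ given in \eqref{eq:matrix-for-DFE}. Since $\bm{\Sigma}_{DFE}$ is a nonsingular M-matrix by \cite[Proposition 3]{ArinoBajeuxKirkland2019}, its eigenvalues lie in the open right half-plane, so $-\bm{\Sigma}_{DFE}$ is a stable matrix and the $(\bS,\bR)$ dynamics are linearly stable at $\bE^\star_0$. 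With (A1)--(A5) and this stability in hand, \cite[Theorem 2]{VdDWatmough2002} yields local asymptotic stability of $\bE^\star_0$ when $\mathcal{R}_0=\rho(FV^{-1})<1$ and instability when $\mathcal{R}_0>1$. The only point requiring genuine care is the bookkeeping in (A2) and (A5): one must classify the interpatch movement of already-infected hosts and of contaminated water as transfer terms ($\mathcal{V}^\pm$) rather than as new infections, so that $F$ retains the clean block form \eqref{eq:F_matrix_comptact} and the spectral computation leading to \eqref{eq:R0-global} remains valid.
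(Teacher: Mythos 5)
Your proposal is correct and follows essentially the same route as the paper: verify hypotheses (A1)--(A5) of \cite[Theorem 2]{VdDWatmough2002}, with (A1)--(A4) being routine sign checks from the construction of $\mathcal{F}$ and $\mathcal{V}$, and the stability of the disease-free subsystem obtained from the fact that the Jacobian of \eqref{sys:at_DFE} is $-\bm{\Sigma}_{DFE}$ with $\bm{\Sigma}_{DFE}$ a nonsingular M-matrix by \cite[Proposition 3]{ArinoBajeuxKirkland2019}. Your added detail on the column diagonal dominance of $V_{11}$ and $V_{22}$ is a slightly more explicit justification of a fact the paper asserts in passing, but the argument is the same.
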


\begin{proof}
We need to check that the hypotheses A1-A5 of \cite[Theorem 2]{VdDWatmough2002} are satisfied.
Hypotheses A1-A4 follow from the procedure used above to derive $F$ and $V$ in the computation of $\R_0$.
Therefore, all we need to check is that the system without disease has the DFE locally asymptotically stable.
In the absence of disease, the vector form \eqref{sys:vector-form} of \eqref{sys:general_form} is the linear system,
\begin{align*} 
\dfrac{d}{dt}\bS &= \bb+(\M^S-\bv-\bd^H)\bS+\bvarepsilon\bR \\
\dfrac{d}{dt}\bR &= \bv\bS+(\M^R-\bvarepsilon-\bd^H)\bR.
\end{align*}
This is exactly \eqref{sys:at_DFE}, so we know it has a unique equilibrium, the disease-free equilibrium \eqref{eq:DFE}.
The Jacobian matrix of the system at any point takes the form
\[
\begin{pmatrix}
    \M^S-\bv-\bd^H & \bvarepsilon \\
    \bv & \M^R-\bvarepsilon-\bd^H
\end{pmatrix}.
\]
We saw in Section~\ref{sec:DFE-full-system} that this matrix is invertible.
From \cite[Lemma 2 and Proposition 3]{ArinoBajeuxKirkland2019}, the spectral abscissa of this matrix is negative, so the disease-free equilibrium is always (locally) asymptotically stable, verifying that assumption (A5) in \cite[Theorem 2]{VdDWatmough2002} holds.
The result follows.
\end{proof}

The following Lemma reinforces Lemma~\ref{lemma:boundedness-1} and is useful when considering the behaviour when $\R_0<1$.

\begin{lemma}
\label{lemma:boundedness-2}
In \eqref{sys:general_form}, the population of susceptible individuals in isolated patch $i=1,\ldots,n$ is asymptotically bounded above, with upper bound given by
\begin{equation}\label{eq:upper_bound_Si}
    \dfrac{b_i}{v_i+d_i^H}\left(1+\frac{\varepsilon_i}{b_i}\right).
\end{equation}
\end{lemma}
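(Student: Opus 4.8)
The plan is to reduce the bound on the susceptible class to a scalar linear differential inequality and then read off the asymptotic upper bound by a comparison (Gr\"onwall-type) argument. I would work throughout with the isolated-patch system \eqref{sys:general_form_1}, and invoke Lemma~\ref{lemma:invariance} to guarantee that all state variables remain nonnegative; in particular this gives $\lambda_i S_i=(\beta_i^I I_i+\beta_i^W W_i)S_i\ge 0$.

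First I would discard the infection loss term in \eqref{sys:general_form_dS_1}: using $\lambda_i S_i\ge 0$ yields
\[
\frac{d}{dt}S_i \le b_i+\varepsilon_i R_i-(v_i+d_i^H)S_i .
\]
If $\bar R_i$ denotes any eventual upper bound for the recovered class (so that $R_i(t)\le \bar R_i$ for all large $t$), then the comparison equation $u'=b_i+\varepsilon_i\bar R_i-(v_i+d_i^H)u$ has the globally attracting equilibrium $(b_i+\varepsilon_i\bar R_i)/(v_i+d_i^H)$, and the comparison principle yields $\limsup_{t\to\infty}S_i(t)\le (b_i+\varepsilon_i\bar R_i)/(v_i+d_i^H)$. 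Matching this to the stated bound $\tfrac{b_i}{v_i+d_i^H}\bigl(1+\varepsilon_i/b_i\bigr)=\tfrac{b_i+\varepsilon_i}{v_i+d_i^H}$ forces $\bar R_i=1$, so the entire argument rests on establishing the asymptotic estimate $\limsup_{t\to\infty}R_i(t)\le 1$ for the recovered compartment.

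Securing that bound on $R_i$ is the step I expect to be the main obstacle. The recovered class obeys $R_i'=\gamma_i I_i+v_i S_i-(\varepsilon_i+d_i^H)R_i$, and the estimate I would have readily in hand is the isolated-patch specialisation of Lemma~\ref{lemma:boundedness-1}: since $\tfrac{d}{dt}N_i^H\le b_i-d_i^H N_i^H$, one obtains $\limsup_{t\to\infty}N_i^H(t)\le b_i/d_i^H$ and hence $\limsup R_i\le b_i/d_i^H$. Feeding $\bar R_i=b_i/d_i^H$ into the comparison step would instead produce the constant $1+\varepsilon_i/d_i^H$, so the crux is to obtain the considerably sharper input $\limsup R_i\le 1$ that the stated constant demands; pinning the recovered population down to unity, rather than to the $b_i/d_i^H$ delivered by the total-population estimate, is precisely where the argument must do its real work, and once such a bound on $R_i$ is available the comparison step above closes the proof immediately.
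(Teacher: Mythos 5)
Your comparison argument is exactly the paper's route: the paper likewise drops $\lambda_i S_i\ge 0$ from \eqref{sys:general_form_dS_1}, bounds the recovered class by the total-population estimate $R_i\le N_i^H\le b_i/d_i^H$ from \eqref{eq:domain} (your ``$\bar R_i=b_i/d_i^H$'' branch), and concludes $S_i(t)\le \frac{b_i}{v_i+d_i^H}\left(1+\frac{\varepsilon_i}{d_i^H}\right)$ for all sufficiently large $t$. In other words, the paper's own proof establishes precisely the constant $1+\varepsilon_i/d_i^H$ that you computed and then discarded, not the constant $1+\varepsilon_i/b_i$ in the printed statement. The displayed bound \eqref{eq:upper_bound_Si} is a misprint: it is dimensionally inconsistent, since $\varepsilon_i$ is a \emph{per capita} rate and $b_i$ a recruitment flux, so $\varepsilon_i/b_i$ carries units of inverse population, whereas $\varepsilon_i/d_i^H$ is dimensionless as required.

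The genuine gap in your proposal is therefore the pivot to proving $\limsup_{t\to\infty}R_i(t)\le 1$. That step cannot be carried out, because the estimate is false: $R_i$ is a population count, and already at the disease-free equilibrium of the isolated patch one has $R_i^\star=\frac{v_i}{\varepsilon_i+v_i+d_i^H}\,\frac{b_i}{d_i^H}$, which exceeds $1$ whenever $v_i>0$ and $b_i/d_i^H$ is moderately large (the paper's simulations use populations of $10{,}000$). Back-solving for $\bar R_i$ from a misprinted constant led you to demand an impossible bound, and no argument will close the proof as you structured it. The repair is simply to restore the constant to $\frac{b_i}{v_i+d_i^H}\left(1+\frac{\varepsilon_i}{d_i^H}\right)$; your own comparison step with $\bar R_i=b_i/d_i^H$ (i.e., the isolated-patch specialisation of Lemma~\ref{lemma:boundedness-1}) then closes immediately and coincides with the paper's proof. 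That this is the intended statement is corroborated by the proof of Theorem~\ref{th:E0-GAS}, which sets $b_i^\star=b_i\left(1+\varepsilon_i/d_i^H\right)$ ``to simplify \eqref{eq:upper_bound_Si}''.
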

\begin{proof}
First, note that from \eqref{eq:domain}, there holds that, for all $i=1,\ldots,n$,
\[
R_i\leq N_i^H\leq \dfrac{b_i}{d_i^H}.
\]
Using this, we have
\begin{align*}
    S'_i &= b_i+\varepsilon_i R_i-\lambda_i S_i-(v_i+d_i^H)S_i \\
    &\leq b_i+\varepsilon_i R_i-(v_i+d_i^H)S_i\\
    &\leq  b_i\left(1+\frac{\varepsilon_i}{d_i^H}\right)-(v_i+d_i^H)S_i.
\end{align*}
Therefore, passing to the limit,
\[
    S_i(t)\leq \dfrac{b_i}{v_i+d_i^H}\left(1+\frac{\varepsilon_i}{d_i^H}\right)
\]
for all sufficiently large $t$.
\end{proof}
\begin{theorem}\label{th:E0-GAS}
$\bE^{\star}_{\b0}$ is globally asymptotically stable when $\mathcal{R}_0^{(i)} < 1$ for all $i=1,\ldots,n$.
\end{theorem}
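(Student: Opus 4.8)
The plan is to prove global attractivity of the disease-free equilibrium $\bE^\star_{\b0}$ by decoupling the dynamics into an ``infected-drives-susceptible'' cascade, and then to combine this with the local asymptotic stability already furnished by Theorem~\ref{th:DFE-LAS}. By Lemmas~\ref{lemma:invariance} and~\ref{lemma:boundedness-1}, every trajectory is nonnegative, satisfies $\bS(t)\gg\b0$, and enters the compact positively invariant set $\bOmega$ of \eqref{eq:domain}, so it suffices to argue on $\bOmega$. The key step is to show that the infected variables vanish, $(\bI(t),\bW(t))\to\b0$; once this is established, the force of infection $\blambda\to\b0$, the $(\bS,\bR)$-subsystem becomes asymptotically autonomous with limiting system \eqref{sys:at_DFE}, and since that linear system has $(\bS^\star,\bR^\star)$ as a globally asymptotically stable equilibrium (its coefficient matrix has negative spectral abscissa by the analysis of Section~\ref{sec:DFE-full-system} and \cite{ArinoBajeuxKirkland2019}), the theory of asymptotically autonomous systems yields $(\bS,\bR)\to(\bS^\star,\bR^\star)$. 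Convergence to $\bE^\star_{\b0}$ together with Theorem~\ref{th:DFE-LAS} then gives global asymptotic stability.

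To drive the infected compartments to zero I would use a comparison argument. Equations \eqref{sys:vector-form-dI} and \eqref{sys:vector-form-dW} read $\frac{d}{dt}(\bI,\bW)^{T}=(\mathcal F(\bS)-V)(\bI,\bW)^{T}$, where $\mathcal F(\bS)$ is the new-infection matrix \eqref{eq:F_matrix_comptact} with the disease-free values $S_i^\star$ replaced by the current $S_i$, and $V$ is the transfer matrix \eqref{eq:V_matrix}. Because $\mathcal F(\bS)\geq\b0$ and $-V$ is Metzler (its off-diagonal entries arise from the shedding rates $\zeta_i$ and from the movement matrices $\M^I,\M^W$, all nonnegative), the system is cooperative, so Kamke's comparison theorem bounds $(\bI,\bW)$ above by the solution of a constant-coefficient linear system. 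The target is that this matrix be Hurwitz, equivalently that the associated next-generation matrix have spectral radius below one. Here the movement matrices help: each $\M^X$ has zero column sums (conservation of individuals under movement, as used in Lemma~\ref{lemma:boundedness-1}), so for a single infected compartment the column sums of the comparison matrix would equal $(\gamma_i+\delta_i+d_i^H)(\mathcal{R}_0^{(i)}-1)<0$, which already forces Hurwitzness via the standard M-matrix criterion; for the coupled $I$--$W$ block I would instead build a strictly positive left vector $\bm{w}$ from the patchwise left Perron vectors to obtain $\bm{w}^{T}(\mathcal F(\bS^\star)-V)\ll\b0$, thereby passing from $\mathcal{R}_0^{(i)}<1$ for all $i$ to the global threshold.

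The main obstacle is the transient overshoot of the susceptibles. The asymptotic bound \eqref{eq:upper_bound_Si} of Lemma~\ref{lemma:boundedness-2} strictly exceeds the disease-free value $S_i^\star$ (by the factor $(v_i+d_i^H+\varepsilon_i)/(v_i+d_i^H)$, reflecting the feedback $\varepsilon_iR_i$ of waned immunity into the susceptibles), so $\mathcal F(\bS)\geq\mathcal F(\bS^\star)$ on the set $\{S_i>S_i^\star\}$. Consequently the clean inequality $\dot L\leq(\mathcal{R}_0-1)\,\bm{w}^{T}(\bI,\bW)^{T}$ that one would get from the Lyapunov candidate $L=\bm{w}^{T}V^{-1}(\bI,\bW)^{T}$ fails, because $\mathcal F(\bS)\leq\mathcal F(\bS^\star)$ does not hold along the whole trajectory. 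The overshoot is genuine, but it is slaved to the infection, being produced by the term $\gamma_iI_i$ feeding $R_i$; as $\bI,\bW\to\b0$ the forcing relaxes and $S_i\to S_i^\star$. I would break the resulting circularity by a fluctuation (iterated $\limsup$) argument: writing coupled inequalities for $\limsup_{t\to\infty}$ of $S_i$, $R_i$, $I_i$, $W_i$ and showing that, when $\mathcal{R}_0^{(i)}<1$ for all $i$, the only consistent fixed point of this monotone loop has $\limsup I_i=\limsup W_i=0$ and $\limsup S_i\le S_i^\star$. An alternative is a composite Lyapunov function augmenting $\bm{w}^{T}V^{-1}(\bI,\bW)^{T}$ with a penalty on the deviations $\bS-\bS^\star$ and $\bR-\bR^\star$, chosen so the sign-indefinite cross terms are absorbed by the strictly negative drift of the Hurwitz susceptible block. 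I expect this coupling between susceptible overshoot and infection decay to be the delicate point, with the passage from the patchwise thresholds to the global one a close second.
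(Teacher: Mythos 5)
Your overall architecture (drive $(\bI,\bW)\to\b0$, then treat the $(\bS,\bR)$-block as asymptotically autonomous with limiting system \eqref{sys:at_DFE}, then combine with Theorem~\ref{th:DFE-LAS}) is a genuinely different route from the paper, which instead exhibits the explicit linear Lyapunov function $L=\sum_i(I_i+f_iW_i)$ with patch-specific weights $f_i>0$ chosen so that the coefficient of $I_i$ in $\dot L$ vanishes. The paper absorbs the susceptible overshoot exactly where you predict trouble: it bounds $S_i$ not by $S_i^\star$ but by the asymptotic bound \eqref{eq:upper_bound_Si} of Lemma~\ref{lemma:boundedness-2}, and the human-movement sums drop out of $\dot L$ because the movement matrices have zero column sums. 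Crucially, this keeps the entire argument at the level of the patchwise quantities $\R_0^{(i)}$ and never requires any statement about the system-wide threshold.

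The gap in your proposal is that the decisive step is never carried out. You correctly diagnose that $\mathcal{F}(\bS)\leq\mathcal{F}(\bS^\star)$ fails along trajectories because of the overshoot factor $(v_i+d_i^H+\varepsilon_i)/(v_i+d_i^H)$, but you then only name two possible repairs (an iterated-$\limsup$ fluctuation argument, or a composite Lyapunov function absorbing the cross terms) without executing either; as written there is no inequality that actually forces $\bI,\bW\to\b0$ under the stated hypothesis. A second, independent problem is the claim that a strictly positive left vector built from the patchwise left Perron vectors yields $\bm{w}^{T}\bigl(\mathcal{F}(\bS^\star)-V\bigr)\ll\b0$ for the coupled system. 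The $\R_0^{(i)}$ are computed at the \emph{isolated-patch} disease-free values, whereas $\bS^\star$ in the coupled system solves \eqref{sys:DFE} and is a different vector; relating the patchwise thresholds $\R_0^{(i)}<1$ to the spectral properties of the coupled next-generation operator is precisely what the authors state they were unable to do at the end of Section~\ref{sec:analysis}, so this cannot be treated as a routine construction. The paper's weighted-sum Lyapunov function is designed specifically to sidestep both obstacles: the weights $f_i$ are solved for patch by patch, and the only price paid is that the resulting decay condition involves $\R_0^{(i)}$ inflated by the overshoot factor rather than the system-wide $\R_0$.
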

%%%%%%%%%%%%%%%%%%%%%%%%%%%%%%%%%%%%utilisation de Lyapunov

\begin{proof}  
Let us show that the function
\[
L(\bS,\bI,\bR,\bW)=\sum_{i=1}^{n} (I_i +  f_i W_i),
\]
with $f_i>0$, is a Lyapunov function for \eqref{sys:general_form}.
First, note that $L(\bS^\star,\b0,\bR^\star,\b0)=0$ and $L(\bS, \bI, \bR, \bW)\geq 0$. Hence $L$ is positive definite.
Then,
\begin{align*}
\frac{d}{dt}L(\bS, \bI, \bR, \bW) &= \sum_{i=1}^{n}I'_i+ \sum_{i=1}^{n}f_iW'_i \\
&= 
\sum_{i=1}^{n}\left[(\beta_{Ii} I_i+\beta_{Wi}W_i)S_i-(\gamma_i+\delta_i+d_i^H)I_i\right] \\
& \quad+\sum_{i=1}^{n}\left[\sum_{j=1}^{n}m_{ij}^{I}I_j+f_i\zeta_iI_i-f_i d_i^C W_i+f_i\sum_{j=1}^{n}m_{ij}^W W_j\right]\\
&= \sum_{i=1}^{n}\left[(\beta_{Ii} I_i+\beta_{Wi}W_i)S_i-(\gamma_i+\delta_i+d_i^H-f_i\zeta_i)I_i-f_id_i^CW_i\right]\\
&\leq \sum_{i=1}^{n}\left[(\beta_{Ii} I_i+\beta_{Wi}W_i)\dfrac{b^{\star}_i}{v_i+d_i^H}-(\gamma_i+\delta_i+d_i^H-f_i\zeta_i)I_i-f_id_i^CW_i\right] \\
&= \sum_{i=1}^{n}\left[\dfrac{\beta_{Ii}b^{\star}_i}{v_i+d_i^H}-(\gamma_i+\delta_i+d_i^H)+\zeta_i f_i \right]I_i+\sum_{i=1}^{n}\left[\dfrac{\beta_{Wi}b^{\star}_i}{v_i+d_i^H} -d_i^Cf_i \right] W_i,
\end{align*}
where $b_i^\star=b_i \left(1+\varepsilon_i/d_i^H\right)$ is used to simplify \eqref{eq:upper_bound_Si}.
We choose the constant $f_i$ such that
\[
    \dfrac{\beta_{Ii}b^{\star}_i}{v_i+d_i^H}-\left(\gamma_i+\delta_i+d_i^H\right)+\zeta_i f_i=0.
\]
i.e., 
\[
   f_i= \dfrac{ \left(v_i+d_i^H\right)\left(\gamma_i+\delta_i+d_i^H\right)-\beta_{Ii}b^{\star}_i}{\zeta_i\left(v_i+d_i^H\right)}.
\]
So 
\begin{align*}
    \frac{d}{dt}L(\bS, \bI, \bR, \bW) & \leq 
    \sum_{i=1}^{n}\left[
    \dfrac{b^{\star}_i\left(d_i^C\beta_{Ii}+\zeta_i\beta_{Wi}\right)}
    {d_i^C \left(v_i+d_i^H\right)\left(\gamma_i+\delta_i+d_i^H\right)}-1 
    \right]
    \dfrac{d_i^C\left(\gamma_i+\delta_i+d_i^H\right)}{\zeta_i}W_i\\
   & \leq 
   \sum_{i=1}^{n}\left[
   \left(\frac{b_i+\epsilon_i}{b_i}\right)\mathcal{R}_0^{(i)}-1 
   \right]
   \dfrac{d_i^C\left(\gamma_i+\delta_i+d_i^H\right)}{\zeta_i}W_i\\
   &\leq \sum_{i=1}^{n}\left[
   \mathcal{R}_0^{(i)}-\left(\frac{b_i}{b_i+\epsilon_i}\right) \right]
   \dfrac{d_i^C\left(b_i+\epsilon_i\right)\left(\gamma_i+\delta_i+d_i^H\right)}
   {\zeta_ib_i}W_i.
\end{align*}
Then $L'(\bS, \bI, \bR, \bW)$ is negative if $\mathcal{R}_0^{(i)}< 1$ for all $i=1,\ldots,n$.
This shows that $L$ is indeed a strict Lyapunov function for \eqref{sec:model} near $\bE_0$; the global asymptotic stability of the DFE follows by the LaSalle invariance principle.
\end{proof}
Thus, if all patches in isolation are such that $\R_0^{(i)}<1$, then the system-wide disease-free equilibrium $\bE_{0}^\star$ is globally asymptotically stable.
Thus, population and pathogen mobility does not make the situation any worse.
Note that this does not completely clarify the situation, since the result is expressed in terms of the individual $\R_0^{(i)}$ in locations given by \eqref{eq:R0-isolated} instead of the system-wide $\R_0$ given by \eqref{eq:R0-global}.
However, although this is not a general rule, metapopulation models often satisfy $\min_{i=1,\ldots,n}\R_0^{(i)}\leq \R_0\leq\max_{i=1,\ldots,n}\R_0^{(i)}$.
If this were the case here, then the conditions of Theorem~\ref{th:E0-GAS} would indeed imply global asymptotic stability when $\R_0<1$.
We were not able to derive a relationship between \eqref{eq:R0-isolated} and \eqref{eq:R0-global}, but considered the relationship numerically in Section~\ref{sec:computational-analysis-full-system}.

The situation when $\R_0>1$ is much more complicated to study.
We suspect that as is common with this type of model, when $\R_0>1$, there is a unique endemic equilibrium that is globally asymptotically stable.
However, this is only considered numerically in the present paper.
%%%%%%%%%%%%%%%%%%%%%%%%%%%%
%%%%%%%%%%%%%%%%%%%%%%%%%%%%
%%%%%%%%%%%%%%%%%%%%%%%%%%%%
%%%%%%%%%%%%%%%%%%%%%%%%%%%%
\section{Computational analysis}
\label{sec:computational}

\begin{table}[htbp]
    \centering
    % \footnotesize
    \begin{tabular}{lllll}
        \toprule
        Parameter & Plausible range & Default value & Unit & Source \\
        \midrule
        \multicolumn{5}{c}{Human-related parameters} \\
        $b_{i}$ & -- & -- & people$\times$day$^{-1}$ & Computed \\ 
        $1/d_i^H$  & -- & 52.5 years & day & World Bank \cite{worldbank_life_expectancy} \\
        $1/\delta$ & [1,50] & 7 & day & \cite{canada_cholera}\\ 
        $1/\gamma$ & [2,60] & 10 & day & \cite{pasteur_cholera}\\
        $1/v$ & [10,730] & 100 & day &\cite{who_cholera_vaccination}\\ 
        $1/\varepsilon$ & [50 days,5 years] & 2 years & day & \cite{pasteur_cholera}\\
        \midrule
        \multicolumn{5}{c}{\emph{V. cholerae}-related parameters} \\
        $\zeta$ & [1,10000] & 1000 & cell L$^{-1}$ & \cite{tuite2011cholera} \\ 
$1/d_C$ & [0.5,30] & 2 & day & See text \\
        \midrule
	\multicolumn{5}{c}{Disease transmission-related parameters} \\
        $\mathcal{R}_0^{(i)}$ & $[0.8,10]$ & 3 & -- & \cite{tien2010multiple}\\
	$\beta_i^{I}$ & [1e-08,1e-06] & 1.101193e-07& day$^{-1}$ &\cite{tien2010multiple}\\ 
	$\beta_i^{W}$ & [1e-08,1e-06] & 1.215594e-07 & day$^{-1}$ & \cite{tien2010multiple} \\ 
        \midrule
        \multicolumn{5}{c}{Movement-related parameters} \\
        $m^H$ & [5,1000] people &  & day$^{-1}$ & See text \\
        $m^W$ & -- &  & day$^{-1}$ & See text \\
        \bottomrule	
    \end{tabular}
    \caption{Values of the parameters used in numerical work. Refer to Table~\ref{tab:parameters} for interpretation of the parameters.} 
    \label{tab:parameter-values}
\end{table}
%%%%%%%%%%

Parameter ranges used in the computational analysis are given in Table~\ref{tab:parameter-values}.
Let us comment on the choice of some of the values.

We compute $b_i$ so that in the absence of disease, $b_i/d_i^H$ gives the population for the location under consideration. 
This is taken to be 10,000 individuals in most of the numerical investigations below; regarding average duration of life, we take the average 52.5 years duration of life at birth in Chad, which is the main focus of this work.

Regarding disease parameters, we allow all parameters to vary in large intervals, as the model can be used in a variety of circumstances, e.g., an outbreak in a refugee camp, which is characterised by high disease-induced death rate $\delta$ and high vaccination rate $v$, but also, ``routine'' vacination with a much lower rate.

We make use of the value of the basic reproduction number $\R_0$ to determine the values of the transmission parameters $\beta^W$ and $\beta^I$. See the discussion in Section~\ref{sec:computational-analysis-single-location}.
In terms of the allowable interval of values of the within-location reproduction numbers $\R_0^{(i)}$ or the system-wide value $\R_0$, we note that values vary depending on several factors, notably the geographical context, environmental conditions, access to potable water and population density.
In the literature, the mean estimate seems to be of $\R_0\in[1.1,4]$, but much higher values are observed in vulnerable populations such as those arising in refugee camps during humanitarian crises, when access to potable water is extremely limited \cite{codecco2001endemic, ganesan2020cholera, nkwayep2024prediction}.
This is the reason why we allow values of up to 20. 
We also consider the non-endemic case and thus consider reproduction numbers in the range $[0.5,20]$.

Finally, concerning movement rates, we use the approximation in \cite{ArinoPortet2015}, which we summarise here.
Suppose $x(t),y(t)$ are the counts or densities of individuals in two locations.
Suppose that we work over a period of time sufficiently short that other flows affecting the system can be neglected.
Denote $m_{yx}$ the movement rate from $x$ to $y$.
Then assuming movement is linear as in \eqref{sys:general_form}, we have $x'=-m_{yx}x$ and $y'=m_{yx}x$. 
Thus, we have the solutions $x(t)=x(0)\exp(-m_{yx}t)$ and $y(t)=y(0)\exp(m_{yx}t)$.
Let $\Delta_{yx}=x(0)-x(1)$, i.e, $\Delta_{yx}$ is the change in one time unit of the number or the concentration of individuals in location $x$.
It follows that
\[
\Delta_{yx}=x(0)-x(1)=x(0)\left(1-e^{-m_{yx}}\right),
\]
i.e., 
\[
m_{yx}=-\ln\left(1-\frac{\Delta_{yx}}{x(0)}\right).
\]
This can also be expressed in terms of the population or density in the destination by noting that in this case, $\Delta_{yx}=y(1)-y(0)=y(0)(e^{m_{yx}}-1)$.
As we consider movement in the range of $[5,1000]$ people a day from one location to another and that all locations have a population of 10,000, this means movement rates are in the range [0.0005,0.105].

% We extend this reasoning here.
% Consider the vector form of the model for humans \eqref{eq:dNHdt_2} and assume there is no disease, i.e.,
% \[
% \frac{d}{dt}\bN_H
% = \bb-(\M-\bd_H)\bN_H.
% \]
% Since $\bb$ is constant, we obtain
% \[
% \bN_H(t) = e^{-(\M - \bd^H)t}\bN_H(0) + (\M - \bd^H)^{-1}(\II - e^{-(\M - \bd^H)t})\bb.
% \]
% We proceed as before: we consider the difference between two nearby times, namely $t=0$ and $t=1$.
% \[
% \bN_H(1) = e^{-(\M - \bd^H)}\bN_H(0) + (\M - \bd^H)^{-1}(\II - e^{-(\M - \bd^H)})\bb.
% \]
% So
% \begin{align*}
% \bN_H(0)-\bN_H(1) &= 
% \bN_H(0) - e^{-(\M - \bd^H)}\bN_H(0) - (\M - \bd^H)^{-1}(\II - e^{-(\M - \bd^H)})\bb \\
% &= \left(\II-e^{-(\M - \bd^H)}\right)\bN_H(0) - (\M - \bd^H)^{-1}(\II - e^{-(\M - \bd^H)})\bb
% \end{align*}

% This can also be written as:
% \begin{equation}
% \bN_H(t) = (\M - \bd^H)^{-1}\bb + e^{-(\M - \bd^H)t}(\bN_H(0) - (\M - \bd^H)^{-1}\bb)
% \label{eq:simplified_solution}
% \end{equation}
% We proceed as before: we consider the difference between two nearby times, namely $t=0$ and $t=1$.
% We have
% \begin{align*}
%    \bN_H(0)-\bN_H(1) &= 
%    \bN_H(0) - (\M - \bd^H)^{-1}\bb - e^{-(\M - \bd^H)}(\bN_H(0) - (\M - \bd^H)^{-1}\bb) \\
%    &= \left(\II - e^{-(\M - \bd^H)}\right) \left(\bN_H(0) - (\M - \bd^H)^{-1}\bb\right) 
% \end{align*}

%%%%%%%%%%%%%%%%%%%%%%%%%%
%%%%%%%%%%%%%%%%%%%%%%%%%%
\subsection{The model in an isolated location}
\label{sec:computational-analysis-single-location}
Metapopulations result from the coupling of similar units, so investigating these constituting units is an extremely useful first step. 
We start the computational analysis by considering a single location in the absence of connection with other locations. 
See the relevant Section~\ref{subsec:analysis-single-location} in the mathematical analysis.
Note that for simplicity, we denote $\beta_W$ and $\beta_I$ the contact parameters in this section.

  \begin{figure}[htbp]
    \centering
    \includegraphics[width=0.8\linewidth]{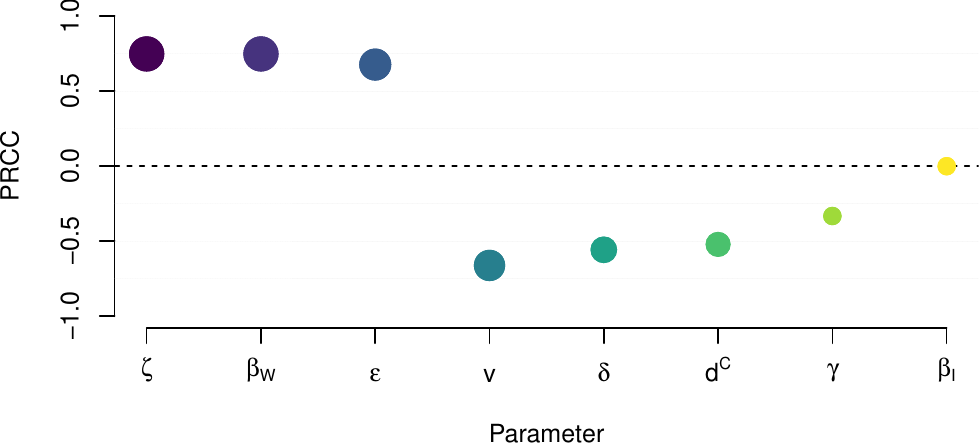}
    \caption{Partial rank correlation coefficients sensitivity analysis of $\R_0^{(i)}$ to coefficients in the isolated case.}
    \label{fig:sensitivity-R0-isolated}
\end{figure}

We first consider the sensitivity of $\R_0^{(i)}$ given by \eqref{eq:R0-isolated} to changes in model parameters.
As noted earlier, because there are two different values of $\beta$ involved, the approach that consists in setting a target range for $\R_0$ then deducing a range for $\beta$ does not work.
We use an \emph{ad-hoc} approach: we ``oversample'' (using Sobol sampling), using a very large number of parameters to cover the ranges of variation of the parameters and retain only those points for which $\R_0^{(i)}$ falls in the prescribed range in Table~\ref{tab:parameter-values}.
Figure~\ref{fig:sensitivity-R0-isolated}, for instance, was generated using a sample of 10 million points, of which 73.2\% gave $\R_0^{(i)}$ values in the desired interval of $[0.5,20]$.

In Figure~\ref{fig:sensitivity-R0-isolated}, we observe that the rate of loss of immunity $\varepsilon$ plays a role comparable in magnitude to that of shedding rate $\zeta$ and the infection parameter $\beta_W$.
Another interesting observation is that the parameter $\beta_I$ describing human-to-human transmission plays virtually no role in changes of $\R_0^{(i)}$, despite being allowed to vary in a comparable range to that of $\beta_W$.

\begin{figure}[htbp] 
    \centering 
    \begin{subfigure}{0.49\textwidth}
        \includegraphics[width=\textwidth]{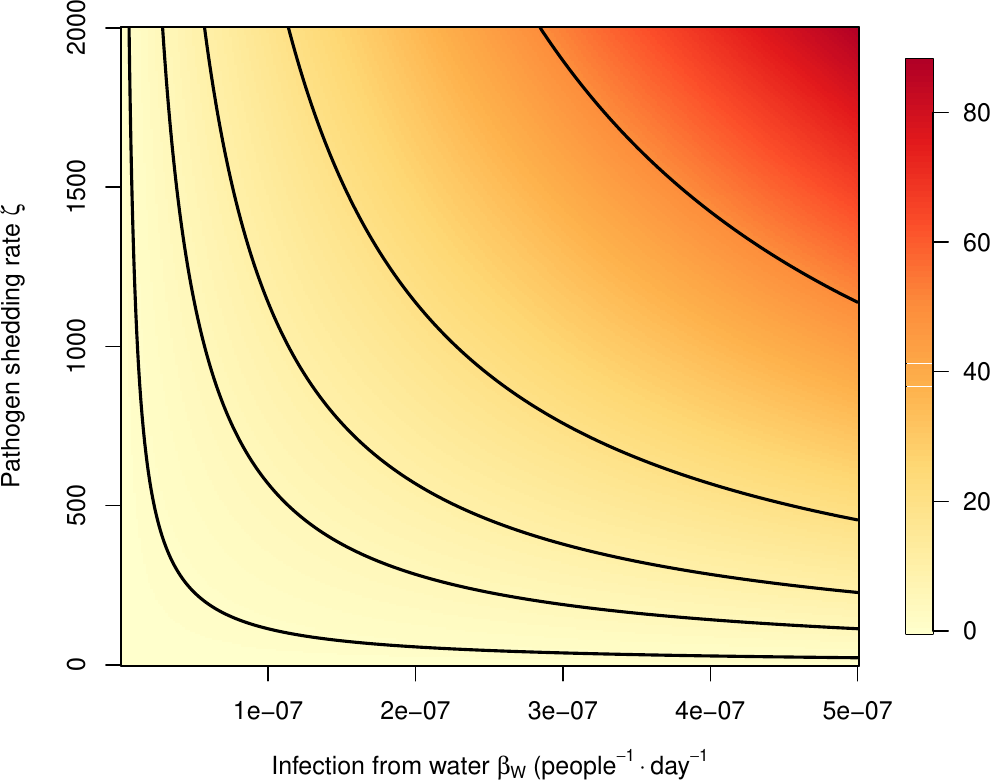} 
        \caption{$\beta_{W}$ and $\zeta$} 
        \label{fig:heatmap-metapo-R0-betaW-zeta} 
    \end{subfigure}
    \begin{subfigure}{0.49\textwidth}
        \includegraphics[width=\textwidth]{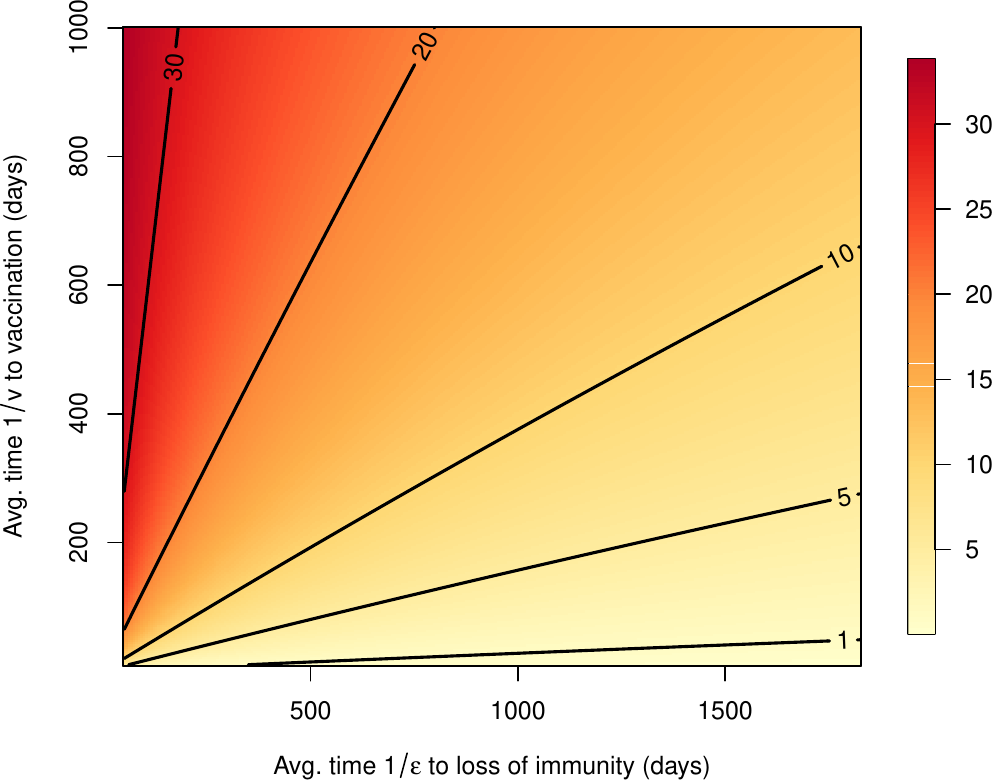} 
        \caption{$1/\varepsilon$ and $1/v$} 
        \label{fig:heatmap-metapo-R0-epsilon-v} 
    \end{subfigure}
    \caption{Value of $\R_0^{(i)}$ as a function of the variation of pairs of parameters.}
    \label{fig:heatmaps}
\end{figure} 

In order to better understand the relative roles of significant parameters, we examine a few combinations in heatmap plots (Figure~\ref{fig:heatmaps}), which show how $\R_0^{(i)}$ changes as a function of the values of pairs of parameters.
Comparing the roles of the contact parameter $\beta_W$ and the shedding rate $\zeta$ in Figure~\ref{fig:heatmap-metapo-R0-betaW-zeta}, we observe that their role is roughly equivalent, as could be inferred from Figure~\ref{fig:sensitivity-R0-isolated}.
From the perspective of disease control, this indicates that measures targetting the contamination of water through $\zeta$ and those targetting water uptake by humans through $\beta_W$ both have an effect of the same nature, so that there is no specific priority in terms of what mechanism to target.
This is quite different from what is seen in Figure~\ref{fig:heatmap-metapo-R0-epsilon-v}, which shows the effect of two parameters that can be thought of as representing aspects of vaccination, the mean time $1/\varepsilon$ to loss of immunity and the mean time $1/v$ to vaccination.
Indeed, recall that our simple model does not distinguish between disease-induced and vaccine-induced immunity, with $\varepsilon$ being simply the rate at which individuals leave the $R$ compartment and become susceptible to the disease again.
While the part of this flow concerning loss of naturally acquired immunity is a characteristic of cholera and cannot be acted upon, the part that concerns loss of vaccine-induced immunity can; making a longer lasting vaccine means decreasing $\varepsilon$.
In Figure~\ref{fig:heatmap-metapo-R0-epsilon-v}, we see that the mean time $1/\varepsilon$ to loss of immunity is the parameter having most effect when times to vaccination are near the middle of the range. 
However, increasing $1/varepsilon$ can only be done through long-term research and might actually not be possible without a radical change in vaccination technology.
In ``real life'', it can therefore be assumed that whatever its value is, $\varepsilon$ is a parameter that public health authorities have little control over. 
However, if the mean time to loss of immunity is indeed within the 2--3 years range as we hypothesise in Table~\ref{tab:parameter-values}, then it is clear that decreasing the mean wait-time to vaccination is potentially quite impactful on the reproduction number.

%%%%%%%%%%%%%%%%%%%%%%%%%%
%%%%%%%%%%%%%%%%%%%%%%%%%%
\subsection{The full interconnected system}
\label{sec:computational-analysis-full-system}
Returning to the full system, we consider three scenarios for the flow of water, which are illustrated in Figure~\ref{fig:graphs}.
In all cases, the locations are assumed to be close and thus the movement of humans is taken to be a complete graph.
\begin{figure}[htbp]
	\centering
	\def\hhskip{*2}
	\def\vvskip{*2}
	\begin{tikzpicture}[auto, auto,
		cloud/.style={draw, ellipse,fill=gray!20},
            scale=0.6, transform shape]
		%%%
		%%% First graph
		%%%
		\node [cloud] at (0\hhskip,-0.5\vvskip) (11) {$1$};
		\node [cloud] at (1\hhskip,-0.5\vvskip) (12) {$2$};
		\node [cloud] at (2\hhskip,-0.5\vvskip) (13) {$3$};
		%% Flows
		\path [line, thick] (11) to (12);
		\path [line, thick] (12) to (13);
		%%%
		%%% Second graph
		%%%
		\node [cloud] at (4\hhskip,0\vvskip) (21) {$1$};
		\node [cloud] at (3\hhskip,-1\vvskip) (22) {$2$};
		\node [cloud] at (5\hhskip,-1\vvskip) (23) {$3$};
		%% Flows
		\path [line,thick] (21) to (22);
		\path [line,thick] (21) to (23);
		\path [line,thick,bend left] (22) to (23);
		\path [line,thick,bend left] (23) to (22);
		%%%
		%%% Third graph
		%%%
		\node [cloud] at (6\hhskip,-0.5\vvskip) (31) {$1$};
		\node [cloud] at (7\hhskip,-0.5\vvskip) (32) {$2$};
		\node [cloud] at (8\hhskip,-0.5\vvskip) (33) {$3$};
		%% Flows
		\path [line,thick,bend left] (31) to (32);
		\path [line,thick,bend left] (32) to (31);
		\path [line,thick,bend left] (32) to (33);
		\path [line,thick,bend left] (33) to (32);
		%%%
		%%% Graph labels
		%%%
		\node [cloud,fill=white,draw=white] at (1\hhskip,-1.75\vvskip) (a) {(a)};
		\node [cloud,fill=white,draw=white] at (4\hhskip,-1.75\vvskip) (b) {(b)};
		\node [cloud,fill=white,draw=white] at (7\hhskip,-1.75\vvskip) (c) {(c)};
	\end{tikzpicture}
	\caption{Three water flow scenarios: (a) river, (b) one patch upstream, two patches in the estuary, (c) three patches along the coastline of a lake.}
	\label{fig:graphs}
\end{figure}
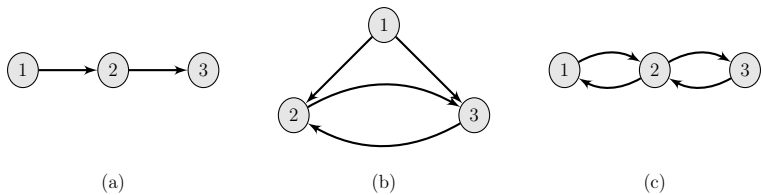
Scenarios (b) and (c) are tailored to the endorheic situation of the Lake Chad Basin, i.e., bacteria are not washed out by a flow of water.

\begin{figure}
    \centering
    \includegraphics[width=0.8\linewidth]{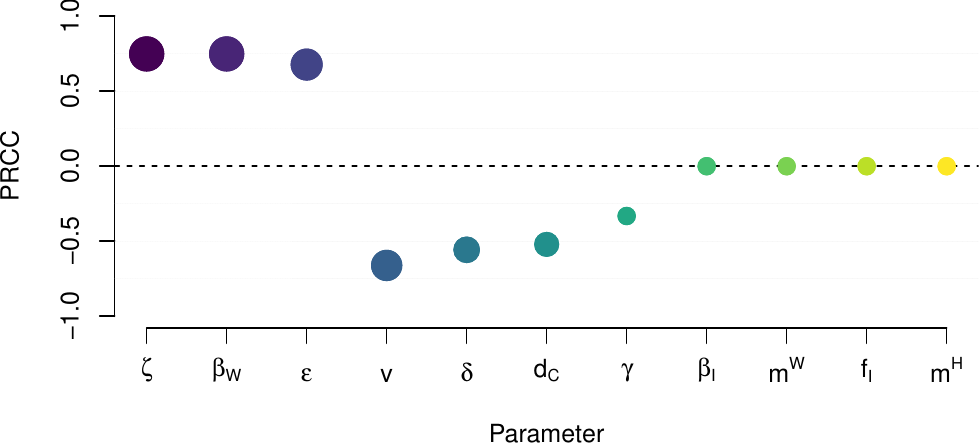}
    \caption{Sensitivity analysis of the system-wide $\R_0$ \eqref{eq:R0-global}.}
    \label{fig:sensitivity-R0-3patches-river}
\end{figure}
First we perform a sensitivity analysis of the system-wide $\R_0$ given by \eqref{eq:R0-global} to variations in parameters.
We proceed as in the isolated case, except that here, we add movement rates $m^H$ of humans and $m^W$ of water to the parameters being varied.
To simplify the sensitivity analysis, we use a single value for each type of movement, using the same value in all relevant off-diagonal entries in the movement matrices.
We add one movement parameter, $f_I\in[0,1]$, not in the mathematical model, which multiplies $\M^I$.
This allows to evaluate the effect of having a different rate of movement for infectious individuals compared to movement of other humans.
Like for movement rates, we use the same parameter value for all locations, e.g., $\beta_I$ is the same in all three locations.
We show the result of the sensitivity analysis for the river scenario in Figure~\ref{fig:sensitivity-R0-3patches-river}.

We observe that movement rates play very little role in the variation of the value of the system-wide $\R_0$.
In the sensitivity figures, values are ordered from left to right by decreasing absolute value of partial rank correlation coefficients, so $m^W$, $f_I$ and $m^H$ play a smaller role than $\beta_I$; all are very small, ranging from a PRCC of $4.079725\times10^{-5}$ for $m^W$, to a PRCC of $-1.632381\times10^{-5}$ for $m^H$, with the PRCC of $f_I$ being $-3.704921\times10^{-5}$.

Note that even though only Figure~\ref{fig:graphs}(c) is strongly connected, most cases will see disease present everywhere because the corresponding digraphs for human movement are complete.
\begin{figure}[htbp]
    \centering
    \includegraphics[width=0.65\linewidth]{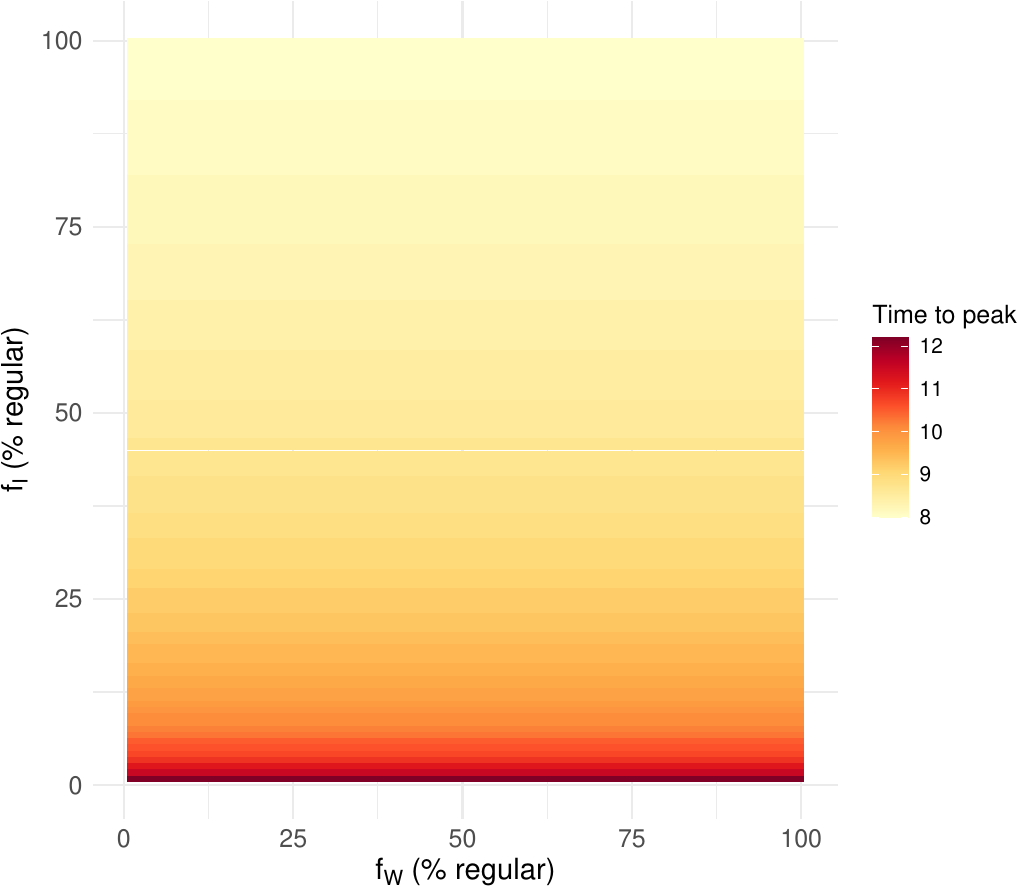}
    \caption{Time to peak prevalence in patch 3 in the river scenario of Figure~\ref{fig:graphs} as a function of the percentage $f_W$ and $f_H$ of nominal movement of water and humans, respectively.}
    \label{fig:heatmap-time-to-peak}
\end{figure}

Actually, the movement of humans ``regularises'' the spread by making it much faster when it is present.
We illustrate this in Figure~\ref{fig:heatmap-time-to-peak}, where we observe that as soon as humans are allowed to move, the time until a peak of prevalence initiated in patch 1 reaches patch 3 greatly diminishes.
Also, we note that the rate of movement of bacteria in the water does not play a role in this scenario.
We do not present results for the other scenarios: they are very similar to the one shown here.

\begin{figure}[htbp]
    \centering
    \includegraphics[width=0.7\linewidth]{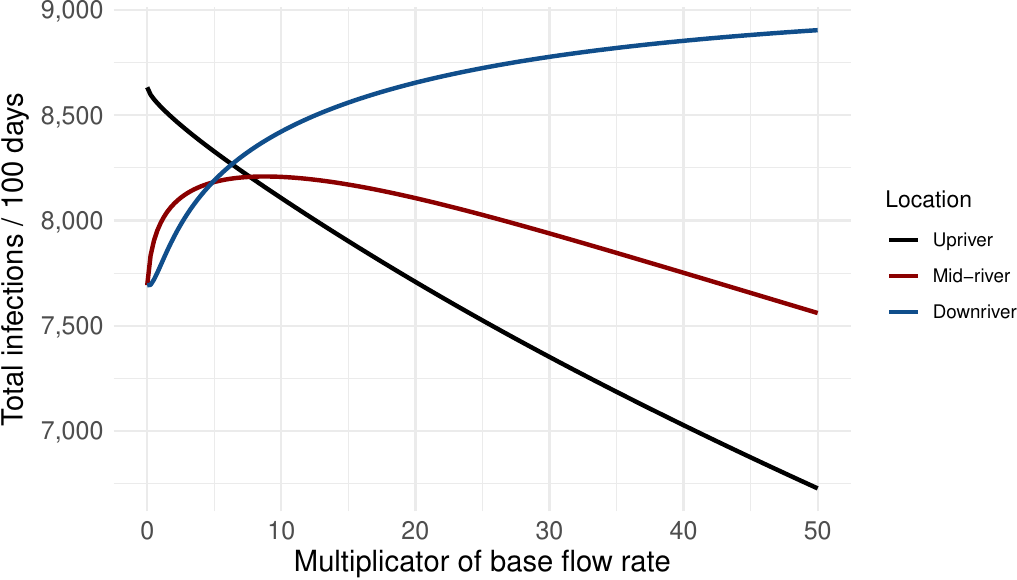}
    \caption{Total number of water-induced infections over 100 days in the river scenario, where locations are named upriver, mid-river and downriver. 
    Initial conditions are no infectious human in any location and an initial pathogen concentration in water of 500 units upriver and none in mid-river or downriver.}
    \label{fig:final-size-high-flow-river}
\end{figure}
We conclude this part by running a numerical experiment motivated by the following thought experiment. 
Given that human-to-human transmission plays very little role, in a river scenario, as the flow rate increases, locations upriver should see a drop in cases, because of a local ``washout'' situation as far as bacteria are concerned, while cases should increase downriver.

To investigate this, we take a (somewhat arbitrary) base water flow rate $m_{21}^W=m_{32}^W=0.04$ from upriver to mid-river and mid-river to downriver as well as an equal number of 50 humans moving between any pair of locations each day.
All other parameters are taken equal in all patches and at the default values given in Table~\ref{tab:parameter-values}.
We choose a combination of $\beta_{Ii}$ and $\beta_{Wi}$ so that $\R_0^{(i)}=5$ in all locations.
In Figure~\ref{fig:final-size-high-flow-river}, we show the total number $\int_0^{100}\beta_{Wi}S_i(s)W_i(s)ds$ of infections accrued over a simulation time period of 100 days in each location as the flow rate is varied from 0 to 50 times the base flow rate.

We observe that for the upriver location, the effect is immediate: the total number of infections decreases as soon as the outflow rate increases; the effect on the downriver location is likewise immediate, albeit in the other direction.
The mid-river location, on the other hand, has the interesting transition from an initial increase when the flow rate increases, to a peak then a decrease once the flow rate becomes larger than some value.

Interestingly, the evolution of change in the total number of infections mid-river as the river flow increases is quite dependent on initial conditions.
First, note that in our experiments (not shown), we always saw a decrease and increase of the total number of infections upriver and downriver, respectively.
Now concerning the mid-river locations, if instead of assuming $I_i(0)=0$, $W_1(0)=500$ and $W_2(0)=W_3(0)=0$, we use (not shown) $I_i(0)=1$, $W_i(0)=0$, then the three curves would start at the same point when the percentage of base flow rate is $0\%$ and the total number of infections would be decreasing for the mid-river location.
The same decreasing behaviour is also observed for ``balanced'' initial conditions equal in all locations.
It seems that a situation as in Figure~\ref{fig:final-size-high-flow-river}, with an initial increase of the total number of cases mid-river, is associated to larger initial conditions upriver, whether it is only in humans, in the water or in both.

One last remark: remember that at the end of Section~\ref{sec:analysis}, we discussed the fact that in metapopulation models, the system-wide basic reproduction number $\R_0$ is often in the interval $[\min\R_0^{(i)},\max\R_0^{(i)}]$.
Although we are at present unable to prove this, the simulation in Figure~\ref{fig:final-size-high-flow-river} shows that one of the hallmark signatures of this property holds.
Indeed, when producing Figure~\ref{fig:final-size-high-flow-river}, we also computed the value of the system-wide reproduction number $\R_0$ for each value and observed that for any value of the multiplier of water movement used, $\R_0=5$.

\section{Discussion}
\label{sec:discussion}
In the mathematical analysis, some points merit further exploration, most important of them the link between $\R_0^{(i)}$ and $\R_0$ as well as the endemic situation.
We suspect that these will follow the usual pattern observed in this type of epidemiological model: 1) the system tends to a unique endemic equilibrium when $\R_0>1$ and 2) the usual bounding of $\R_0$ between the minimum and maximum values of $\R_0^{(i)}$ in isolation holds.
This is what we observe numerically in the large number of simulations carried out and although this does not offer a proof, this together with our experience with this type of models means that we are fairly confident that such results could be conjectured.

What is more surprising is how little influence water movement has in the scenarios we considered. Of course, we considered a limited number of locations, but in practice, all of the scenarios considered essentially had the same behaviour.
The flow of water had no discernable influence as soon as infected humans were able to move between locations.
This suggests that incorporating water movement, either explicitly as we do or implicitly as is done in \cite{tuite2011cholera}, is not a requirement of such models: for most use cases, it probably suffices to consider water as a compartment specific to each location and not subject to implicit or explicit influence by other locations.

One specific scenario might be interesting to explore further regarding the effect of water movement is that of the initial spread of the infection, when the number of humans contaminated is very small.
However, the beginning of an epidemic is typically badly captured by deterministic models and stochastic models such as continuous Markov chains are much better suited.
However, running a continuous time Markov chain is difficult in the case of an infection like cholera, since bacteria and humans operate on completely different scales. 
Human populations number in hundred, thousands, up to millions in the models under consideration.
Bacteria, on the other hand, are present in tremendously large numbers; for instance, \cite{larocque2008proteomic} found a median of $5.4\times 10^6$ (range, $7.0\times 10^2$ to $1.2\times 10^8$) colony forming units (CFU) of \emph{V. cholerae} per \emph{millilitre} of stool sample collected from each of 32 cholera patients upon presentation to a hospital in Dhaka, Bangladesh.

Our model does not incorporate a latent compartment for humans as the latent period for cholera is relatively short, from 1 to 5 days, with about 5\% of cases occurring after only 0.5 days \cite{azman2013incubation}.
If we were to include such a compartment, the model could be made more realistic by assuming then that movement of infectious individuals is unlikely.
However, the numerical considerations here suggest that even if we were to integrate such a process, because we are considering short length movement, 
the behaviour would be the same as observed here: it suffices for humans carrying the pathogen to be able to travel for it to spread widely.

\subsubsection*{Acknowledgements}
JA acknowledges support from NSERC's Discovery Grants program.
PMTD acknowledges support from CNRS-Africa -- Joint Research Program ``FANE-MATH-PE''.
% The authors thank Bertrand Iooss for providing \texttt{R} code implementing the method in \cite{petelet2010latin}.

\subsubsection*{Data availability statement}
The \texttt{R} code used in the computational analysis will be made available on a public github repository.

\appendix

%%%%%%%%%%%%%%%%%%%%%%%%%%
%%%%%%%%%%%%%%%%%%%%%%%%%%
%%%%%%%%%%%%%%%%%%%%%%%%%%
\bibliographystyle{plain}
\bibliography{biblio,biblio_Arino_Julien,epidemio,math_epi,math}
\end{document}